\newtheorem{lemma}{Lemma}[section]
\newtheorem{definition}{Definition}[section]
\DeclareMathOperator*{\E}{\mathbb{E}}
\newcommand{\R}{\mathbb{R}}
\newcommand{\V}[1]{\mathbf{#1}}
\newcommand{\Sol}{\mathbf{x}}
\newcommand{\transpose}{{\textrm{T}}}
\newcommand{\B}{\widehat{b}}
\newcommand{\ignore}[1]{}
\newcommand{\kibitz}[2]{\ifnum\Comments=1\textcolor{#1}{#2}\fi}
\newcommand{\yc}[1]  {\kibitz{red}      {\bf\noindent [Yiling: #1]} }
\newcommand{\sherry}[1]  {\kibitz{green}      {\bf\noindent [Sherry: #1]} }
\title{Active Information Acquisition for Linear Optimization}
\author{ {\bf Shuran Zheng}\\
School of Engineering \\
and Applied Sciences \\
Harvard University\\
Cambridge, MA\\
\And
{\bf Bo Waggoner}\\ 
The Warren Center for \\
Network and Data Sciences\\
 Univerisity of Pennsylvania\\
 Philadelphia, PA\\
\And
{\bf Yang Liu}\\
School of Engineering \\
and Applied Sciences \\
Harvard University\\
Cambridge, MA\\
\And
{\bf Yiling Chen} \\
School of Engineering \\
and Applied Sciences \\
Harvard University\\
Cambridge, MA\\
}
\begin{document}
\maketitle

\begin{abstract}
  We consider partially-specified optimization problems where the goal is to actively, but efficiently, acquire missing information about the problem in order to solve it.
  An algorithm designer wishes to solve a linear program (LP), $\max \V{c}^T \V{x}$ s.t. $\V{A}\V{x} \leq \V{b}, \V{x} \ge \V{0}$, but does not initially know some of the parameters.
  The algorithm can iteratively choose an unknown parameter and gather information in the form of a noisy sample centered at the parameter's (unknown) value.
  The goal is to find an approximately feasible and optimal solution to the underlying LP with high probability while drawing a small number of samples.
  We focus on two cases.
  (1) When the parameters $\V{b}$ of the constraints are initially unknown, we propose an efficient algorithm combining techniques from the ellipsoid method for LP and confidence-bound approaches from bandit algorithms.
  The algorithm adaptively gathers information about constraints only as needed in order to make progress.
  We give sample complexity bounds for the algorithm and demonstrate its improvement over a naive approach via simulation.
  (2) When the parameters $\V{c}$ of the objective are initially unknown, we take an information-theoretic approach and give roughly matching upper and lower sample complexity bounds, with an (inefficient) successive-elimination algorithm.
  
\end{abstract}

\section{INTRODUCTION}
Many real-world settings are modeled as optimization problems. For example, a delivery company plans driver routes to minimize the driver's total travel time; an airline assigns vehicles to different origin-destination pairs to maximize profit. However, in practice, some parameters of the optimization problems may be initially unknown. The delivery company may not know the average congestion or travel time of various links of the network, but has ways to poll Waze\footnote{https://www.waze.com} drivers to get samples of travel times on links of the network. The delivery company may not know the demand and the revenues for each origin-destination pair, but can get estimates of them by selling tickets on chosen origin-destination pairs.  
\sherry{added two examples here}\yc{I take out the reinforcement learning example and changed the phrasing slightly.}

To capture such settings, this paper proposes a model of optimization wherein the algorithm can iteratively choose a parameter and draw a ``sample'' that gives information about that parameter; specifically, the sample is an independent draw from a subgaussian random variable centered at the true value of the parameter.
This models, for instance, observing the congestion on a particular segment of road on a particular day. 
Drawing each sample is presumed to be costly, so the goal of the algorithm is to draw the fewest samples necessary in order to find a solution that is approximately feasible and approximately optimal.

Thus, the challenge falls under an umbrella we term \emph{active information acquisition for optimization (AIAO)}.
The key feature of the AIAO setting is the structure of the optimization problem itself, i.e. the objective and constraints.
The challenge is to understand how the difficulty of information acquisition relates to this underlying structure.
For example, are there information-theoretic quantities relating the structure to the sample complexity?
Meanwhile, the opportunity of AIAO is to exploit algorithms for the underlying optimization problem.
For example, can one interface with the algorithm to reduce sample complexity by only acquiring the information needed, as it is needed?

These are the questions investigated in this paper, which focuses on active information acquisition for linear optimization problems.
Specifically, we consider linear programs in the form
\begin{equation}
  \max_{\V{x}} \V{c}^T \V{x}, \textrm{ s.t. } \V{A}\V{x} \le \V{b},  \V{x}\geq \V{0}\label{eqn:lp-definition}
\end{equation}
with $\V{A} \in \mathbb R^{m \times n}, \V{c} \in \mathbb R^{n}$, and $\V{b} \in \mathbb R^{m}$. 
We will consider either the case that the $\V{b}$ in the constraints is unknown or else the case that the $\V{c}$ in the objective is unknown, with all other parameters initially known to the algorithm.
The algorithm can iteratively choose an unknown parameter, e.g. $b_i$, and draw a ``sample'' from it, e.g. observing $b_i + \eta$ for an independent, zero-mean, subgaussian $\eta$.
The algorithm must eventually output a solution $\V{x}$ such that, with probability $1-\delta$, $\V{A}\V{x} \leq \V{b} + \varepsilon_1\V{1}$ and $\V{c}^T \V{x} \geq \V{c}^T \V{x^*} - \varepsilon_2$, where $\V{x^*}$ is the optimal solution.
The goal is for the algorithm to achieve this while using as few total samples as possible.

There is a natural ``naive" or ``static'' approach: draw samples for all unknown parameters until they are known to high accuracy with high probability, then solve the ``empirical'' linear program.
However, we can hope to improve by leveraging known algorithms and properties of linear programs.
For example, in the case that $\V{b}$ is unknown, if a linear program has an optimal solution, it has an optimal solution that is an extreme point (a corner point of the feasible region); and at this extremal optimal solution, several constraints are binding. These suggest that it is more important to focus on binding constraints and to gather information on the differing objective values of extreme points. Algorithms developed in this paper leverage these properties of linear programs to decide how much information to acquire for each unknown parameter.  


\yc{I took out the paragraph on the related work here since we now have moved the related work to the intro.}

\subsection{APPROACHES AND RESULTS} \sherry{shortened this part.}


\paragraph{Two settings and our approaches.}
The paper investigates two settings: unknown $\V{b}$ but known $\V{c}$, and unknown $\V{c}$ but known $\V{b}$. 
We always suppose $\V{A}$ is known and assume that the linear program has an optimal solution.

It might initially appear that these cases are ``equivalent'' via duality theory, but we argue that the two cases are quite different when we do not know the parameters of LP exactly.
Given a \emph{primal} linear program of the form (\ref{eqn:lp-definition}), the \emph{dual} program is given by $\min_{\V{y}} \V{b}^T \V{y}$ s.t. $\V{A}^T \V{y} \geq \V{c}, \V{y} \geq \V{0}$, which is easily transformed into the maximization format of (\ref{eqn:lp-definition}).
In particular, the parameters $\V{c}$ in the objective function of a primal LP becomes the parameters in the constraints of the dual LP.
By duality theory, the (exact) optimal solutions to the primal and dual are connected by \emph{complementary slackness} conditions.
However, this approach breaks down in the approximate setting for two reasons.
First, approximately optimal solutions do not satisfy complementary slackness; and second, even knowing which constraints bind does not suffice to determine the optimal solution $\V{x^*}$ when some of the constraint or objective parameters are unknown.\footnote{Nor does knowing which constraints bind even necessarily help, as approximately satisfying them may still lead to large violations of other constraints.
    Thus, while we do not rule out some future approach that connects approximate solutions of the primal and dual, the evidence suggests to us that the two settings are quite different and we approach each differently in this paper.}
We hence take two different approaches toward our two settings. 


\paragraph{Unknown-$\V{b}$ case.}
In the unknown $\V{b}$ setting, the uncertainty is over the constraints. Our algorithm combines two main ideas: the ellipsoid method for solving linear programs, and multi-armed bandit techniques for gathering data. The ellipsoid algorithm only requires the information of one violated constraint at each iteration, if there exists a violated one. We then leverage the multi-armed bandits method to find the most violated constraint (if there exists one) using as few samples as possible.

We theoretically bound the number of samples drawn by our algorithm as a function of the parameters
of the problem. In simulations on generated linear programs, UCB-Ellipsoid far outperforms the naive approach of sampling all parameters to high accuracy, and approaches the performance of an oracle that knows the binding constraints in advance and needs only to sample these. In other words, the algorithm spends very few resources on unimportant parameters.

\paragraph{Unknown-$\V{c}$ case.}
In the unknown-objective setting, we know the feasible region exactly. Our algorithm focuses only on the set of extreme points of the feasible region. For each of the extreme point, there are a set of possible values for $\V{c}$ such that if $\V{c}$ takes any value in the set, this extreme point is the optimal solution to the LP.
The algorithm hence draws just enough samples to determine with high probability which is actually the case for the true $\V{c}$.

We define an information-theoretic measure, $Low(\mathcal{I})$ for an instance $\mathcal{I}$.
We show that this quantity essentially characterizes the sample complexity of a problem instance and we give an algorithm, not necessarily efficient, for achieving it up to low-order factors.

\section{RELATED WORK} \label{app:relate}

The setting considered in this paper, active information acquisition for optimization, is related at a conceptual level to a large number of lines of work that deal with optimization and uncertainty. But it differs from them mostly in either the active aspect or the optimization aspect. We'll discuss some of these related lines of work and the differences below. 


\paragraph{Optimization under uncertainty}
Our problem considers optimization with uncertain model parameters, which is also a theme in stochastic programming~\citep{heyman2003stochastic,neely2010stochastic}, chance-constrained programming~\citep{ben2009robust}, and robust optimization~\citep{ben2009robust,bertsimas2004robust}. In both stochastic optimization and chance-constrained programming, there are no underlying true, fixed values of the parameters; instead, a probabilistic, distributional model of the parameters is used to capture the uncertainty and such model of uncertainty becomes part of the optimization formulation. Hence, optimality in expectation or approximate optimality is sought after under the probabilistic model. 
But in our problem underlying fixed parameters exist, and the problem is only how much information to gather about them.
Meanwhile, robust optimization models deterministic uncertainty (e.g. parameters come from a known set) and often seeks for a worst-case solution, \emph{i.e.} a solution that is feasible in the worst case over a set of possible constraints. 
A key distinction of our model is that there are underlying true values of the parameters and we do not incorporate any probabilistic or deterministic model of the uncertainty into the optimization problem itself. Instead, we take an "active querying" approach to approximately solve the true optimization problem with high probability. 



\paragraph{Artificial intelligence.}


Several existing lines of work in the artificial intelligence literature, deal with actively acquiring information about parameters of an optimization problem in order to solve it.
Preference elicitation~\citep{braziunas2006computational,blum2004preference} typically focuses on acquiring information about parameters of the \emph{objective} by querying a user about his preferences, this is similar to our goal for the unknown-$\V{c}$ setting. Relevant to our unknown-$\V{b}$ case, for more combinatorial problems, the constraint acquisition literature~\citep{oconnell2002strategies,bessiere2015constraint} is closer to our problem in some respects, as it posits an optimization problem with unknown constraints that must be learned via interactive querying. 
We emphasize that a central feature of the model in this paper is \emph{noisy observations}: the observations of the algorithm are only noisy samples of a true underlying parameter. The key challenge is to choose how many repeated samples of each parameter to draw. This aspect of the problem is not to our knowledge present in preference elicitation or model/constraint acquisition. 

\paragraph{Machine learning and theoretical computer science.}
Much work in \emph{active learning} considers acquiring data points iteratively with a goal of low sample complexity \citep{balcan2006agnostic,balcan2007margin,castro2008minimax}.The key difference to AIAO is between \emph{data} and \emph{parameters}.
In learning, the goal is to minimize the average or expectation of some loss function over a distribution of data points.
Other than its likelihood, each data point plays the same role in the problem.
Here, the focus is on how much information about each of various parameters is necessary to solve a structured optimization problem to a desired level of accuracy.
In other words, the key question here, which is how much an optimization algorithm needs to know about the parameters of the problem it is solving, does not apply in active learning.

A line of work on sample complexity of reinforcement learning (or approximate reinforcement learning)~\citep{kakade2003sample, lattimore2012pac, azar2012sample, DBLP:journals/corr/abs-1710-06100, chen2016stochastic} also bears some resemblance to our problem. A typical setting considered is solving a model-free Markov Decision Processes (MDP), where the transition probabilities and the reward functions are initially unknown but the algorithm can query an oracle to get samples. This problem is a special case of our AIALO problem with unknown $\V{A}$ and $\V{b}$ because an MDP can be formulated as an linear program. The solutions provided focuses on the particular structure of MDP, while we consider general linear programs. 


Broadly related is recent work on \emph{optimization from samples} \citet{balkanski2016power}, which considers the sample complexity of a two-stage process: (1) draw some number of i.i.d. data points; (2) optimize some loss function or submodular function on the data.
In that setting, the algorithm sees a number of input-output pairs of the function, randomly distributed, and must eventually choose a particular input to optimize the function.
Therefore, it is quite different from our setting in both important ways: (1) the information collected are data points (and evaluations), as in ML above, rather than parameters as in our problem; (2) (so far) it is not active, but acquires information in a batch.

A line of work that is closely related to our unknown-$\V{c}$ problem is the study of \emph{combinatorial pure exploration} (CPE) problem, where a learner collects samples of unknown parameters of an objective function to identify the optimal member in a solution set. The problem was first proposed in \cite{chen2014combinatorial}, and subsequently studied by~\cite{gabillon2016improved, chen2016combinatorial, chen2016pure, chen17nearly}.  CPE only considers combinatorial optimization problems whose solution set contains only binary vectors of length $n$. A recent work by~\cite{chen17nearly} extended CPE to a \emph{General-Sampling} problem by allowing general solution sets. Our unknown-$\V{c}$ problem can be fitted into the setting of General-Sampling. Our algorithm for unknown-$\V{c}$ was inspired by the work of~\cite{chen17nearly}, but leverages the structure of LP and hence has better sample complexity performance than directly treating it as a General-Sampling problem.
The General-Sampling problem does not encompass all AIAO settings, e.g., our unknown-$\V{b}$ case.


\section{MODEL AND PRELIMINARIES} \label{sec:prelim}

\subsection{THE AIALO PROBLEM}
We now formally define an instance $\mathcal{I}$ of the \emph{active information acquisition for linear optimization (AIALO)} problem.
We then describe the format of algorithms for solving this problem.
Note that one can easily extend this into a more general formal definition of AIAO, for more general optimization problems, but we leave this for future work.

An instance $\mathcal{I}$ consists of three components.
The first component consists of the \emph{parameters} of the underlying linear program on $n$ variables and $m$ constraints: a vector $\V{c} \in \R^n$, a vector $\V{b} \in \R^m$, and a matrix $\V{A} \in \R^{n\times m}$.
Naturally, these specify the program\footnote{Note that any linear program can be transformed into the given format with at most a factor $2$ increase in $n$ and $m$. }
\begin{equation}
  \max_{\V{x}} \V{c}^T \V{x} ~ \textrm{ s.t. } \V{A}\V{x} \le \V{b},  \V{x} \geq \V{0}.
\end{equation}
We assume for simplicity in this paper that all linear programs are feasible and are known \emph{a priori} to have a solution of norm at most $R$.
The second component specifies which parameters are \emph{initially known} and which are \emph{initially unknown}.
The third and final component specifies, for each unknown parameter (say $c_i$), of a $\sigma^2$-subgaussian distribution with mean equal to the value of the parameter.\footnote{Distribution $\mathcal{D}$ with mean $\mu$ is $\sigma^2$-subgaussian if, for $X \sim \mathcal{D}$, we have $\mathbb E[e^{t (X-\mu)}] \leq e^{\sigma^2 t^2/2}$ for all $t$. The family of sub-Gaussian distributions with parameter $\sigma$ encompasses all distributions that are supported on $[0,\sigma]$ as well as many unbounded distributions such as Gaussian distributions with variance $\sigma^2$.}

Given $\mathcal{I}$, we define the following sets of approximately-feasible, approximately-optimal solutions.
\begin{definition} \label{def:correct-solution}
  Given an instance $\mathcal{I}$, let $\V{x^*}$ be an optimal solution to the LP.
  Define $OPT(\mathcal{I};\varepsilon_1,\varepsilon_2)$ to be the set of solutions $\V{x}$ satisfying $\V{c}^T \V{x} \geq \V{c}^T \V{x^*} - \varepsilon_1$ and $\V{A}\V{x} \leq \V{b} + \varepsilon_2 \mathbf{1}$.
  We use $OPT(\mathcal{I})$ as shorthand for $OPT(\mathcal{I};0,0)$.
\end{definition}

\subsection{ALGORITHM SPECIFICATION}
An algorithm for the AIALO problem, run on an instance $\mathcal{I}$, functions as follows.
The algorithm is given as input $n$ (number of variables), $m$ (number of constraints), and $\sigma^2$ (subgaussian parameter).
It is also given the second component of the instance $\mathcal{I}$, i.e. a specification of which parameters are known and which are unknown.
For each parameter that is specified as ``known'', the algorithm is given the value of that parameter, e.g. it is given ``$A_{11}=42$.''
Finally, the algorithm is given an optimality parameter $\varepsilon_1$, a feasibility parameter $\varepsilon_2$, and a failure probability parameter $\delta$.

The algorithm may iteratively choose an unknown parameter and \emph{sample} that parameter: 
observe an independent and identically-distributed draw from the distribution corresponding to that parameter (as specified in the third component of the instance $\mathcal{I}$).
At some point, the algorithm stops and outputs a \emph{solution} $\V{x} \in \R^n$.

\begin{definition}[$(\delta,\varepsilon_1,\varepsilon_2)$-correct algorithm] \label{def:correct-alg}
  An algorithm $\mathcal{A}$ is \emph{$(\delta,\varepsilon_1,\varepsilon_2)$-correct} if for any instance $\mathcal{I}$ and inputs $(\delta,\varepsilon_1,\varepsilon_2)$, with probability at least $1-\delta$, $\mathcal{A}$ outputs a solution $\V{x} \in OPT(\mathcal{I}; \varepsilon_1,\varepsilon_2)$.
  In the case $\varepsilon_1=\varepsilon_2=0$, we say $\mathcal{A}$ is \emph{$\delta$-correct}.
\end{definition}

Our goal is to find algorithms with low \emph{sample complexity}, i.e., the number of samples drawn by the algorithm.

\ifodd 1
\newcommand{\rev}[1]{{\color{blue}#1}} 
\newcommand{\com}[1]{} 
\newcommand{\clar}[1]{\textbf{\color{green}(NEED CLARIFICATION: #1)}}
\newcommand{\response}[1]{\textbf{\color{magenta}(RESPONSE: #1)}} 
\else
\newcommand{\rev}[1]{#1}
\newcommand{\com}[1]{}
\newcommand{\clar}[1]{}
\newcommand{\response}[1]{}
\fi

\section{UNKNOWN CONSTRAINTS} \label{sec:problem}
We will first consider the \emph{unknown-$\V{b}$ case} where every parameter of the constraint vector $\V{b}$ is initially unknown, and all other parameters are initially known.
Geometrically, the algorithm is given an objective ``direction'' ($\V{c}$) and a set of constraint ``orientations'' ($\V{A}$), but does not initially know the ``offset'' or displacement $b_i$ of each constraint $i$.

In this setting, we do not expect to attain either exact feasibility or exact optimality, as the exact constraints can never be known, and in general an arbitrarily small change in constraints of an LP leads to a nonzero change in the value of the optimal solution.

The section begins with a lower bound of the sample complexity (across all LP instances) of any correct algorithm. 

\begin{restatable}[Lower bound for unknown $\V{b}$]{theorem}{unknownbLower}
\label{lb2:unknownb}
Suppose we have a $(\delta, \varepsilon_1, \varepsilon_2)$-correct algorithm $\mathcal{A}$ where $\delta \in (0, 0.1), \varepsilon_1>0, \varepsilon_2>0$. Then for any $n>0$, there exists infinitely many instances of the AIALO problem with unknown-$\V{b}$ with $n$ variables with objective function $\Vert \V{c} \Vert_\infty = 1$ such that $\mathcal{A}$ must draw at least
\[
\Omega\left(n \ln(1/\delta)\cdot \max\{\varepsilon_1, \varepsilon_2\}^{-2}\right)
\]
samples in expectation on each of them.
\end{restatable}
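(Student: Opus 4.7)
The plan is to apply a classical change-of-measure lower bound from pure-exploration bandit theory to a family of LP instances in which each of the $n$ unknown parameters $b_j$ must be individually estimated to accuracy $\Theta(\max\{\varepsilon_1,\varepsilon_2\})$. For each $\beta \geq 2(\varepsilon_1+\varepsilon_2)$, I would define an LP on $n$ nonnegative variables with objective $\V{c}=\V{e}_1$ (so $\|\V{c}\|_\infty=1$) and the $n$ constraints $x_1 \leq b_i$ for $i=1,\ldots,n$; in the base instance $\mathcal{I}(\beta)$, every $b_i$ equals $\beta$, and the sampling distribution for each $b_i$ is $\mathcal{N}(b_i,\sigma^2)$, which is $\sigma^2$-subgaussian as required. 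Varying $\beta$ yields infinitely many instances. For each $j\in[n]$ I would define an alternate instance $\mathcal{I}^{(j)}(\beta)$ identical to $\mathcal{I}(\beta)$ except that $b_j=\beta-\Delta$, where $\Delta := 2(\varepsilon_1+\varepsilon_2)$.

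Next I would verify that the $(\delta,\varepsilon_1,\varepsilon_2)$-correct output sets for the base and each alternate are disjoint. In $\mathcal{I}(\beta)$ the LP has unique optimal value $\beta$, and the $\varepsilon_1$-optimality and per-constraint $\varepsilon_2$-feasibility conditions force any correct $\V{x}$ to satisfy $x_1 \in [\beta-\varepsilon_1,\beta+\varepsilon_2]$. In $\mathcal{I}^{(j)}(\beta)$ the $j$-th constraint tightens to $x_1\leq\beta-\Delta$, the optimum becomes $\beta-\Delta$, and correctness forces $x_1\in[\beta-\Delta-\varepsilon_1,\beta-\Delta+\varepsilon_2]$. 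Because $\Delta>\varepsilon_1+\varepsilon_2$, these intervals are disjoint, so no solution can be simultaneously correct for both instances.

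The final step is the standard change-of-measure argument (as in the pure-exploration bandit literature). Disjointness of correct-output sets combined with $(\delta,\varepsilon_1,\varepsilon_2)$-correctness on both instances implies, via the data-processing inequality applied to the event ``output lies in $\mathrm{OPT}(\mathcal{I}(\beta);\varepsilon_1,\varepsilon_2)$,'' that
\[
\sum_{a=1}^{n}\mathbb{E}_{\mathcal{I}(\beta)}[N_a]\cdot\mathrm{KL}\!\bigl(\mathcal{N}(b_a^{(0)},\sigma^2)\,\big\|\,\mathcal{N}(b_a^{(j)},\sigma^2)\bigr)\;\ge\; d(\delta,1-\delta)\;=\;\Omega(\ln(1/\delta)),
\]
where $N_a$ is the number of samples of parameter $b_a$ and $d$ is the binary KL. Since the two instances agree on every parameter except $b_j$, only the $a=j$ term is nonzero and equals $\mathbb{E}_{\mathcal{I}(\beta)}[N_j]\cdot \Delta^2/(2\sigma^2)$. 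Rearranging gives $\mathbb{E}_{\mathcal{I}(\beta)}[N_j]=\Omega(\sigma^2\ln(1/\delta)/\Delta^2)=\Omega(\ln(1/\delta)/\max\{\varepsilon_1,\varepsilon_2\}^2)$, and summing over $j=1,\ldots,n$ yields the claimed lower bound on $\mathbb{E}_{\mathcal{I}(\beta)}[\tau]=\sum_{j=1}^n \mathbb{E}_{\mathcal{I}(\beta)}[N_j]$.

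The main obstacle is designing the construction so that perturbations of a single $b_j$ by only $\Theta(\max\{\varepsilon_1,\varepsilon_2\})$ already force a different output. The natural setup $\V{A}=I$, $\V{c}=\V{1}$ fails: the $\varepsilon_1$ slack in the scalar objective plus the coordinate-wise $\varepsilon_2$ slack summed across the other $n-1$ constraints can together absorb perturbations of magnitude up to $\varepsilon_1+n\varepsilon_2$, so the disjointness step would only go through for $\Delta=\Omega(n\varepsilon_2)$, losing a factor of $n$ in the $\varepsilon_2$-regime. Routing all $n$ unknown constraints through the single objective-carrying variable $x_1$ eliminates both slacks at once, enabling $\Delta=\Theta(\max\{\varepsilon_1,\varepsilon_2\})$ and hence the tight lower bound.
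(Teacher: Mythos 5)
Your proof is correct and follows essentially the same route as the paper's: a hard instance whose $n$ binding constraints all pass through the objective-carrying variable $x_1$, a per-constraint perturbation of size $2(\varepsilon_1+\varepsilon_2)$ to make the $(\varepsilon_1,\varepsilon_2)$-correct output sets disjoint, the change-of-distribution lemma to force $\Omega(\ln(1/\delta)/\max\{\varepsilon_1,\varepsilon_2\}^2)$ samples of each $b_j$, and a sum over $j$. The only cosmetic difference is the constraint matrix (the paper uses $x_1\le C$ and $x_1+x_i\le C$, giving a bounded feasible region, rather than $n$ identical rows $x_1\le b_i$), which does not affect the argument.
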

The idea of the lower bound (proof in Appendix~\ref{app:unknownb_lower}) is that in the worst case, an algorithm must accurately estimate at least all $n$ binding constraints (in general with $n$ variables, up to $n$ constraints bind at the optimal solution). It remains open whether we can get a tighter lower bound which also captures the difficulty of ruling out non-binding constraints.

\subsection{ELLIPSOID-UCB ALGORITHM}
\paragraph{Background.}

The standard ellipsoid algorithm for linear programming begins with an ellipsoid $\mathcal{E}^{(0)}$ known to contain the optimal solution, 
Then, it checks two cases: (1) The center of this ellipsoid $\V{x}^{(0)}$ is feasible, or (2) it is not feasible.
If (2), say it violates constraint $i$, then the algorithm considers the halfspace defined by the constraint $\V{A}_i \V{x}^{(0)} \leq b_i$.
If (1), the algorithm considers the halfspace defined by the ``constraint'' $\V{c}^T \V{x} \geq \V{c}^T \V{x}^{(0)}$, as the optimal solution must satisfy this constraint.
In either case, it updates to a new ellipsoid $\mathcal{E}^{(1)}$ defined as the minimal ellipsoid containing the intersection of $\mathcal{E}^{(0)}$ with the halfspace under consideration.


The obstacle is that, now, $\V{b}$ is initially unknown.
A first observation is that we only need to find a single violated constraint, so there may be no need to sample most parameters at a given round.
A second observation is that it suffices to find the \textbf{most violated} constraint.
This can be beneficial as it may require only a few samples to find the most violated constraint; and in the event that no constraint is violated, we still need to find an upper bound on ``closest to violated'' constraint in order to certify that no constraint is violated.

To do so, we draw inspiration from algorithms for \emph{bandits} problems (whose details are not important to this paper).
Suppose we have $m$ distributions with means $\mu_1,\dots,\mu_m$ and variances $\sigma_1^2,\dots,\sigma_m^2$, and we wish to find the largest $\mu_i$.
After drawing a few samples from each distribution, we obtain estimates $\widehat{\mu}_i$ along with confidence intervals given by tail bounds.
Roughly, an ``upper confidence bound'' (UCB) algorithm (see \emph{e.g.} \cite{jamieson2014best}) for finding $\max_i \mu_i$ proceeds by always sampling the $i$ whose upper confidence bound is the highest.

We therefore will propose a UCB-style approach to doing so, but with the advantage that we can re-use any samples from earlier stages of the ellipsoid algorithm.

\paragraph{Algorithm and results.}
Ellipsoid-UCB is given in Algorithm \ref{alg:modified-ellipsoid}.
At each round $k=1,2,\dots$, we choose the center point $\V{x}^{(k)}$ of the current ellipsoid $\mathcal{E}^{(k)}$ and call the subroutine Algorithm \ref{alg:ucb-method} to draw samples and check for violated constraints.
We use the result of the oracle to cut the current space exactly as in the standard ellipsoid method, and continue.

Some notation: $t$ is used to track the total number of samples drawn (from all parameters) and $T_i(t)$ denotes the number of samples of $b_i$ drawn up to ``time'' $t$.
The average of these samples is:
\begin{definition} \label{def:b-hat}
  Let $X_{i,s}$ denote the $s$-th sample of $b_i$ and let $T_i(t)$ denote the number of times $b_i$ is sampled in the first $t$ samples. Define $\widehat{b}_{i,T_i(t)}=\sum_{s=1}^{T_i(t)} X_{i,s}/T_i(t)$ to be the empirical mean of $b_i$ up to ``time'' $t$.
\end{definition}

\begin{algorithm}[!h]                      
\small
\caption{Modified ellipsoid algorithm}
\label{alg:modified-ellipsoid}
\begin{algorithmic}
    \STATE Let $\mathcal{E}^{(0)}$ be the initial ellipsoid containing the feasible region.\\
    \STATE Draw one sample for each $b_i$, $i\in [m]$.\\
    \STATE Let $k = 0$ and $t=m$.\\
    \STATE Let $T_i(t) = 1$ for all $i$. \\
    \WHILE{stopping criterion is not met\footnotemark}
            \STATE Let $\V{x}^{(k)}$ be the center of $\mathcal{E}^{(k)}$
            \STATE Call UCB method to get constraint $i$ or ``feasible''
            \IF{$\V{x}^{(k)}$ is feasible}
                \STATE $\V{x} \gets \V{x}^{(k)}$ if $\V{x}$ not initialized or $\V{c}^T \V{x}^{(k)} > \V{c}^T \V{x}$. 
                \STATE $\V{y} \gets -\V{c}$
            \ELSE
                \STATE $\V{y}\gets \V{A}_{i}^T$
            \ENDIF
            \STATE Let $\mathcal{E}^{(k+1)}$ be the minimal ellipsoid that contains $\mathcal{E}^{(k)} \cap \{\V{p} ~:~ \V{y}^T \V{p} \le \V{y}^T\V{x}^{(k)}\}$\\
            \STATE Let $k \gets k+1$
    \ENDWHILE
    \STATE Output $\Sol$ or ``failure'' if it was never set.
\end{algorithmic}
\end{algorithm}
\footnotetext{Our stopping criterion is exactly the same as in the standard ellipsoid algorithm, for which there are a variety of possible criteria that work. In particular, one is $\sqrt{\V{c}^T \V{P}^{-1} \V{c}} \leq \min\{\varepsilon_1,\varepsilon_2\}$, where $P$ is the matrix corresponding to ellipsoid $\mathcal{E}^{(k)}$ as discussed above.}

\begin{algorithm}[!h]                      
\small
\caption{UCB-method}
\label{alg:ucb-method}
\begin{algorithmic}
    \STATE Input $\V{x}^{(k)}$
    \STATE Output either index $j$ of a violated constraint, or ``feasible''.
    \STATE Set $\delta' = \left(\frac{\delta}{20m}\right)^{2/3}$
    \LOOP
            \STATE  1. Let $j$ be the constraint with the largest index,
            \[
                j = \arg\max_i \V{A}_i \V{x}^{(k)} - \widehat{b}_{i,T_i(t)} + U_i(T_i(t)),
            \]
            where $U_i(s) = 3\sqrt{2\sigma_i^2\log\left(\log(3s/2)/\delta'\right)/s}$ and $\widehat{b}_{i,T_i(t)}$ as in Definition \ref{def:b-hat}.
            \STATE 2. If $\V{A}_j \V{x}^{(k)} - \widehat{b}_{j,T_j(t)} - U_j(T_j(t))> 0$ return $j$.
            \STATE 3. If $\V{A}_j \V{x}^{(k)} - \widehat{b}_{j,T_j(t)} + U_j(T_j(t))< 0$ return ``feasible''.
            \STATE 4. If $U_j(T_j(t))< \varepsilon_2/2$ return ``feasible''.
            \STATE 5. Let $t \gets t+1$
            \STATE 6. Draw a sample of $b_j$.
            \STATE 7. Let $T_j(t) = T_j(t-1) + 1$.
            \STATE 8. Let $T_i(t) = T_i(t-1)$ for all $i\neq j$.
    \ENDLOOP
\end{algorithmic}
\end{algorithm}

The performance of the algorithm is measured by how many samples (observations) it needs to draw.
To state our theoretical results, define $V_i(k) = \V{A}_i \V{x}^{(k)} - b_i$ to be the amount by which the $i$-th constraint is violated by $\V{x}^{(k)}$, and $V^*(k) = \max_i V_i(k)$. Define $gap_{i,\varepsilon}(k) = \max\{|V_i(k)|,  V^*(k) -V_i(k), \varepsilon\}$ and $\Delta_{i,\varepsilon}$ $=  \min_k gap_{i,\varepsilon}(k)$. 

\begin{restatable}[Ellipsoid-UCB algorithm]{theorem}{sampleEllipsoidUCB}
\label{main}
 The Ellipsoid-UCB algorithm is $(\delta,\epsilon_1,\epsilon_2)$-correct and with probability $1-\delta$, draws at most the following number of samples: 
 {\small
 \[ 
O\left(\sum_{i=1}^m \frac{\sigma_i^2}{\Delta_{i,\varepsilon_2/2}^2}\log\frac{m}{\delta} + \sum_{i=1}^m \frac{\sigma_i^2}{\Delta_{i,\varepsilon_2/2}^2}\log\log\left(\frac{\sigma_i^2}{\Delta_{i,\varepsilon_2/2}^2}\right)\right).
\]}
Specifically, the number of samples used for $b_i$ is at most $\frac{\sigma_i^2}{\Delta_{i,\varepsilon_2/2}^2}\left(\log(m/\delta)+\log\log(\sigma_i^2/\Delta_{i,\varepsilon_2/2}^2)\right)$.
\end{restatable}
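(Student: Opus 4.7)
The plan is to set up a single high-probability ``good concentration event'' $\mathcal{G}$, verify correctness of the UCB subroutine on that event, import the standard ellipsoid analysis to obtain $(\varepsilon_1,\varepsilon_2)$-correctness, and finally convert the exit conditions of the UCB loop into the sample-complexity bound on a per-constraint basis.

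\textbf{Step 1 (good event).} Let $\mathcal{G}$ be the event that for every $i\in[m]$ and every $s\ge 1$, $|\widehat{b}_{i,s}-b_i|\le U_i(s)$. The key ingredient is an anytime, law-of-iterated-logarithm style concentration inequality for averages of $\sigma_i^2$-subgaussian samples (cf.\ the peeling argument in \cite{jamieson2014best}), which controls the probability that the confidence band is ever violated on a single arm by $C(\delta')^{3/2}$ for some absolute constant $C$. With the algorithm's choice $\delta'=(\delta/20m)^{2/3}$, a union bound over $i$ yields $\Pr[\mathcal{G}]\ge 1-\delta$. All remaining arguments are conditioned on $\mathcal{G}$.

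\textbf{Step 2 (correctness).} Three soundness claims for the UCB subroutine need to be verified: if rule 2 returns index $j$, concentration gives $\V{A}_j \V{x}^{(k)} - b_j > 0$, so the constraint is truly violated; if rule 3 returns ``feasible'', the upper confidence bound on $V_i(k)$ is negative for every $i$ and so every constraint is strictly satisfied at $\V{x}^{(k)}$; if rule 4 fires, then rule 2 did not, so for the arg-max $j$ we have $\V{A}_j \V{x}^{(k)} - \widehat{b}_{j,T_j(t)} \le U_j(T_j(t)) < \varepsilon_2/2$, and combined with $b_j \ge \widehat{b}_{j,T_j(t)} - U_j(T_j(t))$ from $\mathcal{G}$ this gives $\V{A}_j \V{x}^{(k)} - b_j < \varepsilon_2$, a bound that transfers to every constraint because $j$ maximizes the UCB score. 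With the subroutine sound, the outer loop is the textbook ellipsoid method on an (at worst) $\varepsilon_2$-enlarged feasible polytope: every constraint-cut uses a truly violated halfplane and preserves $\V{x^*}$ in the ellipsoid, while every objective-cut only records an $\varepsilon_2$-feasible candidate $\V{x}$. The standard stopping criterion $\sqrt{\V{c}^T P^{-1} \V{c}} \le \min\{\varepsilon_1,\varepsilon_2\}$ then delivers $\V{c}^T \V{x} \ge \V{c}^T \V{x^*} - \varepsilon_1$.

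\textbf{Step 3 (sample complexity).} Fix $i$ and suppose $b_i$ is sampled at some time $t$ during ellipsoid round $k$. Then $i$ was the arg-max of the UCB score and exit rules 2, 3, 4 all failed. On $\mathcal{G}$, rule 2's failure gives $V_i(k)\le 2U_i(T_i(t))$, rule 3's failure gives $-V_i(k)\le 2U_i(T_i(t))$, and rule 4's failure gives $\varepsilon_2/2\le U_i(T_i(t))$. Letting $j^*$ denote the true arg-max of $V_j(k)$ over $j$, concentration makes the UCB score of $j^*$ at least $V^*(k)$, so the arg-max property combined with the $\mathcal{G}$-upper-bound on $i$'s UCB score yields $V^*(k)-V_i(k)\le 2U_i(T_i(t))$. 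Together these four inequalities give $gap_{i,\varepsilon_2/2}(k)\le 2U_i(T_i(t))$, hence $\Delta_{i,\varepsilon_2/2}\le 2U_i(T_i(t))$. Substituting the formula for $U_i$ and inverting the resulting implicit inequality---a routine computation in UCB analyses---gives the per-arm bound $T_i = O(\sigma_i^2 \Delta_{i,\varepsilon_2/2}^{-2}(\log(m/\delta)+\log\log(\sigma_i^2 \Delta_{i,\varepsilon_2/2}^{-2})))$. Summing over $i$ produces the claimed total.

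\textbf{The hard part.} The technical heart is securing the correct anytime subgaussian concentration: a naive peeling/union-over-times argument yields failure probability $O(\delta')$ rather than the $O((\delta')^{3/2})$ needed to absorb the algorithm's choice $\delta'=(\delta/20m)^{2/3}$. So the most delicate step is verifying that the specific form of $U_i(s)$ in Algorithm~\ref{alg:ucb-method} (the constants $3\sqrt{2}$ and the $\log(3s/2)$ factor) achieves this stronger tail bound. A secondary subtlety is confirming that the modified ellipsoid invariant really survives the $\varepsilon_2$-slack introduced by rule 4---specifically, that excluding $\V{x^*}$ after an objective cut at a merely $\varepsilon_2$-feasible center is harmless because the stored candidate already dominates $\V{x^*}$ in objective value. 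Beyond these two points, the rest is careful bookkeeping on top of textbook UCB and ellipsoid analyses.
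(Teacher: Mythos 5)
Your proposal matches the paper's proof essentially step for step: the same anytime LIL-style good event with $\delta'=(\delta/20m)^{2/3}$ union-bounded over the $m$ constraints (Lemma 3 of Jamieson et al.\ supplies exactly the $O((\delta')^{3/2})$ per-arm failure probability you flag as the delicate point), the same three soundness checks for the UCB exit rules, and the same four inequalities---from the arg-max property and the failure of rules 2--4---yielding $gap_{i,\varepsilon_2/2}(k)\le 2U_i(T_i(t))$, hence $\Delta_{i,\varepsilon_2/2}\le 2U_i(T_i(t))$, and the per-constraint bound by inversion. The only difference is that you are somewhat more explicit than the paper about why the $\varepsilon_2$-slack from rule 4 does not break the ellipsoid invariant; the paper leaves that step implicit.
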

\begin{proof}[Proof Sketch:]
Define event $\mathcal{A}$ to be the event that $\left|\B_{i,s} - b_i \right| \le U_i(s)$ for all $s$ and $i\in[m]$.
According to Lemma 3 in \cite{jamieson2014lil}, $\mathcal{A}$ holds with probability at least $1-\delta$. 

\textbf{Correctness:} 
Conditioning on event $\mathcal{A}$ holds, our UCB method will only return a constraint that is violated (line 2) and when it returns ``feasible'', no constraint is violated more than $\varepsilon_2$ (line 3 and 4).

\textbf{Number of samples:} We bound the number of samples used on each constraint separately. Consider a fixed ellipsoid iteration $k$ in which UCB method is given input $\V{x}^{(k)}$, the key idea is to prove that if $b_i$ is sampled in this iteration at ``time'' $t$, $U_i(T_i(t))$ should be larger than $gap_{i,\varepsilon_2/2}(k)$. This gives an upper bound of $T_i(t)$. 
Taking the maximum of them, we get the final result.

\end{proof}

%

\paragraph{Discussion.}
To understand the bound, suppose for simplicity that each $\sigma_i = 1$.
We observe that the first term will dominate in all reasonable parameter settings, so we can ignore the second summation in this discussion.

Next, note that each term in the sum reflects a bound on how many times constraint $i$ must be sampled over the course of the algorithm.
This depends inversely on $\Delta_{i,\varepsilon_2/2}$, which is a minimum over all stages $k$ of the ``precision'' we need of constraint $i$ at stage $k$.
We only need a very precise estimate if both of the following conditions are satisfied:
(1) $|V_i(k)|$ is small, meaning that the ellipsoid center $\V{x}^{(k)}$ is very close to binding constraint $i$;
 (2) There is no other constraint that is significantly violated, meaning that $i$ is very close to the most-violated constraint for $\V{x}^{(k)}$ if any.
Because this is unlikely to happen for most constraints, we expect $\Delta_{i,\varepsilon_2/2}$ to generally be large (leading to a good bound), although we do not have more precise theoretical bounds.
The only constraints where we might expect $\Delta_{i,\varepsilon_2/2}$ to be small are the binding constraints, which we expect to come close to satisfying the above two conditions at some point.
Indeed, this seems inevitable for any algorithm, as we explore in our experiments.

\paragraph{Comparison to static approach.}
Again suppose each $\sigma_i = 1$ for simplicity.
Note that each $\Delta_{i,\varepsilon_2/2} \geq \varepsilon_2/2$.
This implies that our bound is always better than $O\left(\frac{m \log(m/\delta)}{\varepsilon_2^2}\right)$, ignoring the dominated second term.

The static approach is to measure each $b_i$ with enough samples to obtain a good precision so that relaxed feasibility can be satisfied with high probability, then solve the linear program using the estimated constraints.
This uses $\frac{4m\log(m/\delta)}{\varepsilon_2^2}$ samples.
(This number comes from using tail bounds to ensure good precision is achieved on every $b_i$.)

Therefore, the UCB-Ellipsoid algorithm dominates the static approach up to some constant factors and can show dramatic instance-wise improvements.
Indeed, in some simple cases, such as the number of variables equal to the number of constraints, we do not expect any algorithm to be able to improve over the static approach.
However, a nice direction for future work is to show that, if $m$ is very large compared to $n$, then the UCB-Ellipsoid algorithm (or some other algorithm) is guaranteed to asymptotically improve on the static approach.




\section{UNKNOWN OBJECTIVE FUNCTION} \label{sec:unknownc}
In this section, we consider the \emph{unknown-$\V{c}$ case}.
Here, every parameter of the objective $\V{c}$ is initially unknown, and all other parameters are initially known.
Geometrically, the algorithm is initially given an exact description of the feasible polytope, in the form of $\V{A}\V{x} \leq \V{b}$ and $\V{x} \geq 0$, but no information about the ``direction'' of the objective.

Because the constraints are known exactly, we focus on exact feasibility in this setting, i.e. $\varepsilon_2 = 0$.
We also focus on an information-theoretic understanding of the problem, and produce an essentially optimal but computationally inefficient algorithm. We assume that there is a unique optimal solution $\V{x^*}, $\footnote{If we only aim for a $\varepsilon_1$-suboptimal solution, we can terminate our algorithm when $\varepsilon^{(r)}$ (defined in Line 5 of Algorithm~\ref{unknowncAlgo}) becomes smaller than $\varepsilon_1/2$, such that the algorithm no longer requires the best point to be unique.} and consider the problem of finding an exact optimal solution with confidence $\delta$ (i.e., a $\delta$-correct algorithm). We also make the simplifying assumption that each parameter's distribution is a Gaussian of variance $1$ (in particular is $1$-subgaussian). Our results can be easily extend to the general case.

\sherry{Added a line here.} Our approaches are based on the techniques used in \cite{chen17nearly}, but address a different class of optimization problems. We thus use the same notations as in \cite{chen17nearly}. We first introduce a function $Low(\mathcal{I})$ that characterizes the sample complexity required for an LP instance $\mathcal{I}$.  The function $Low(\mathcal{I})$ is defined by the solution of a convex program. We then give an instance-wise lower bound in terms of the $Low(\mathcal{I})$ function and the failure probability parameter $\delta$. We also formulate a worst-case lower bound of the problem, which is polynomially related to the instance-wise lower bound.
Finally, we give an algorithm based on successive elimination that matches the worst-case lower bound within a factor of $\ln(1/\Delta)$, where $\Delta$ is the gap between the objective function value of the optimal extreme point ($\V{x^*}$) and the second-best.

\subsection{LOWER BOUNDS}

The function $Low(\mathcal{I})$ is defined as follows.
\begin{definition}[$Low(\mathcal{I})$]
  For any instance $\mathcal{I}$ of AIALO (or more generally, for any linear program), we define $Low(\mathcal{I}) \in \R$ to be the optimal solution to the following convex program.
  		\begin{align}
           \min_\tau \quad & \sum_{i=1}^n \tau_i \label{low}\\
          s.t.\quad & \sum_{i=1}^n \frac{(s_i^{(k)} -x_i^*)^2}{\tau_i} \le \left( \V{c}^T (\V{x}^* - \V{s}^{(k)}) \right)^2, \forall k \notag \\
           &\tau_i\ge 0, \forall i \notag
          \end{align}
  Here $\V{x}^*$ is the optimal solution to the LP in $\mathcal{I}$ and $\V{s}^{(1)},\dots, \V{s}^{(k)}$ are the extreme points of the feasible region $\{\V{x}:\V{A}\V{x}\le \V{b}, \V{x}\ge \V{0}\}$.  
\end{definition}

For intuition about $Low(\mathcal{I})$, consider a thought experiment where we are given 
 an extreme point $\V{x^*}$, and we want to check whether or not $\Sol^*$ is the optimal solution using as few samples as possible.
Given our empirical estimate $\widehat{\V{c}}$ we would like to have enough samples so that with high probability, for each $\V{s}^{(k)} \neq \Sol^*$, we have
  \[  \widehat{\V{c}}^\transpose(\Sol^* - \V{s}^{(k)}) > 0  ~~ \iff ~~ \V{c}^\transpose(\Sol^* - \V{s}^{(k)}) > 0 . \]
This will hold by a standard concentration bound (Lemma~\ref{gaussianLem}) \emph{if enough samples of each parameter are drawn}; in particular, ``enough'' is given by the $k$-th constraint in (\ref{low}).
 
 \begin{restatable}[Instance lower bound]{theorem}{unknowncInstLowerbound} \label{thm:instance_lowerbound}
        Let $\mathcal{I}$ be an instance of AIALO in the unknown-$\V{c}$ case. For $0<\delta<0.1$, any $\delta$-correct algorithm $\mathcal{A}$  must draw  $$\Omega(Low(\mathcal{I}) \ln \delta^{-1})$$ samples in expectation on $\mathcal{I}$.
 \end{restatable}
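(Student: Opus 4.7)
The plan is the classical change-of-measure argument of Kaufmann--Capp\'e--Garivier (KCG), where the convex program defining $Low(\mathcal{I})$ arises as the tight aggregation of the sample-complexity conditions needed to rule out each competing extreme point. Fix a $\delta$-correct algorithm $\mathcal{A}$ on $\mathcal{I}$ with true objective $\V{c}$, unique optimum $\V{x}^*$, and other extreme points $\V{s}^{(1)},\dots,\V{s}^{(N)}$. For each $k$, I would build an alternative instance $\mathcal{I}^{(k)}$ that differs from $\mathcal{I}$ only in the objective, using some $\V{c}'^{(k)}$ chosen so that $\V{s}^{(k)}$ becomes the unique maximizer over the (unchanged) feasible region. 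Correctness forces $\mathcal{A}$ to output $\V{x}^*$ with probability $\ge 1-\delta$ under $\mathcal{I}$ and $\V{s}^{(k)}$ with probability $\ge 1-\delta$ under $\mathcal{I}^{(k)}$, so the KCG inequality applied to the event $\{\text{output}=\V{x}^*\}$ yields
\[
\tfrac{1}{2}\sum_{i=1}^n \mathbb{E}[T_i]\,(c_i - c_i'^{(k)})^2 \;\ge\; \mathrm{kl}(1-\delta,\delta) \;\ge\; \ln\!\bigl(1/(2.4\delta)\bigr),
\]
using that $\mathrm{KL}(\mathcal{N}(\mu,1)\|\mathcal{N}(\mu',1))=\tfrac12(\mu-\mu')^2$ and letting $T_i$ count samples of $c_i$.

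Next I would optimize the alternative. For each $k$ I want the $\V{c}'^{(k)}$ that is closest to $\V{c}$ in the weighted norm $\sum_i\mathbb{E}[T_i](\cdot)^2$ while still making $\V{s}^{(k)}$ beat $\V{x}^*$. Writing $d_i := c_i'^{(k)} - c_i$, this is the problem of minimizing $\sum_i \mathbb{E}[T_i] d_i^2$ subject to $\sum_i d_i(s_i^{(k)} - x_i^*) \ge \V{c}^T(\V{x}^* - \V{s}^{(k)})$. A Lagrangian (equivalently Cauchy--Schwarz) calculation gives the minimizer $d_i \propto (s_i^{(k)}-x_i^*)/\mathbb{E}[T_i]$ and minimum value $\bigl(\V{c}^T(\V{x}^*-\V{s}^{(k)})\bigr)^2 / \sum_i (s_i^{(k)}-x_i^*)^2/\mathbb{E}[T_i]$. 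Since the KCG bound must hold for \emph{every} valid alternative, it must hold for this minimizer, which after rearranging gives
\[
\sum_i \frac{(s_i^{(k)} - x_i^*)^2}{\mathbb{E}[T_i]/(2\ln(1/(2.4\delta)))} \;\le\; \bigl(\V{c}^T(\V{x}^* - \V{s}^{(k)})\bigr)^2.
\]
Defining $\tau_i := \mathbb{E}[T_i]/(2\ln(1/(2.4\delta)))$ turns this into the $k$-th feasibility constraint of the program defining $Low(\mathcal{I})$. Because the bound holds for \emph{all} $k$ simultaneously, the vector $\tau$ is feasible for that program, so $\sum_i \tau_i \ge Low(\mathcal{I})$, which rearranges to $\sum_i \mathbb{E}[T_i] \ge 2\ln(1/(2.4\delta))\cdot Low(\mathcal{I}) = \Omega(Low(\mathcal{I})\ln(1/\delta))$.

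The main obstacle I anticipate is the uniqueness-of-optimum condition: the Cauchy--Schwarz-optimal $\V{c}'^{(k)}$ only makes $\V{s}^{(k)}$ \emph{tie} with $\V{x}^*$ rather than strictly dominate it, so $\V{s}^{(k)}$ is not the \emph{unique} maximizer required for $\delta$-correctness on $\mathcal{I}^{(k)}$. My fix is to tighten the constraint to $\sum_i d_i(s_i^{(k)}-x_i^*) \ge \V{c}^T(\V{x}^*-\V{s}^{(k)}) + \eta$ for small $\eta>0$, run the argument verbatim, and take $\eta \downarrow 0$ by continuity of the resulting quadratic optimum in $\eta$. A secondary technical point is that KCG requires $\mathcal{A}$ to stop almost surely under both $\mathcal{I}$ and every $\mathcal{I}^{(k)}$ with finite expected sample count; this can be assumed without loss, since an algorithm with infinite expected sample complexity satisfies the claimed bound trivially.
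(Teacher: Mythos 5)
Your proposal is correct and follows essentially the same route as the paper: a per-extreme-point change-of-measure argument via the Kaufmann et al.\ lemma, with the alternative objective perturbed in the direction $d_i \propto (s_i^{(k)}-x_i^*)/\mathbb{E}[T_i]$ so that the resulting KL inequality becomes exactly the $k$-th feasibility constraint of the $Low(\mathcal{I})$ program, whence $\mathbb{E}[\sum_i T_i] = \Omega(Low(\mathcal{I})\ln\delta^{-1})$. The only difference is cosmetic: where you take the exact Cauchy--Schwarz minimizer and resolve the tie by an $\eta \downarrow 0$ limit, the paper simply doubles the perturbation so that $\V{s}^{(k)}$ strictly beats $\V{x}^*$ (which suffices, since one only needs $\V{x}^*$ to be non-optimal in the alternative instance), at the cost of a constant factor.
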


We believe that it is unlikely for an algorithm to match the instance-wise lower bound without knowing the value of $\V{c}$ and $\Sol^*$ in the definition of $Low(\mathcal{I})$. To formally prove this claim, for any $\delta$-correct algorithm $\mathcal{A}$, we construct a group of LP instances that share the same feasible region $\V{A} \V{x} \le \V{b}$, $\V{x}\ge \V{0}$ but have different objective functions and different optimal solutions. We prove that $\mathcal{A}$ will have unmatched performance on at least one of these LP instances.

Our worst-case lower bound can be stated as follows.
\begin{restatable}[Worst-case lower bound for unknown $\V{c}$]{theorem}{unknowncWorstLowerbound}
\label{thm:worst_lowerbound}
        Let $n$ be a positive integer and $\delta\in (0, 0.1)$. For any $\delta$-correct algorithm $\mathcal{A}$, there exists an infinite sequence of LP instances with $n$ variables, $\mathcal{I}_1, \mathcal{I}_2, \dots$, such that $\mathcal{A}$ takes
        \[
        \Omega\left(Low(\mathcal{I}_k)(\ln |S^{(1)}_k| + \ln \delta^{-1}) \right)
        \]
        samples in expectation on $\mathcal{I}_k$, where $S^{(1)}_k$ is the set of all extreme points of the feasible region of $\mathcal{I}_k$, and $Low(\mathcal{I}_k)$ goes to infinity.
        \end{restatable}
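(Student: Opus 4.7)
The goal is to exhibit, for each $\delta$-correct algorithm $\mathcal{A}$, an infinite sequence $\{\mathcal{I}_k\}$ of LP instances on which $\mathcal{A}$ draws at least $\Omega(Low(\mathcal{I}_k)(\ln|S^{(1)}_k|+\ln\delta^{-1}))$ samples in expectation, with $Low(\mathcal{I}_k)\to\infty$. The extra $\ln|S^{(1)}_k|$ factor beyond the instance-wise bound of Theorem~\ref{thm:instance_lowerbound} should reflect that the algorithm cannot directly exploit the $\V{x}^*$-dependent structure of $Low(\mathcal{I})$ without knowing which extreme point is optimal; it must, in effect, ``search'' among all $|S^{(1)}_k|$ extreme points.

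The plan is to construct, for each $k$, a family of instances that share a common feasible region. Fix a ``symmetric'' polytope $P_k=\{\V{x}:\V{A}_k\V{x}\le\V{b}_k,\V{x}\ge\V{0}\}$ whose extreme-point set $S^{(1)}_k$ is acted on transitively by a group of coordinate permutations, with $|S^{(1)}_k|$ growing in $k$ (a rotated regular simplex, cross-polytope, or hypercube is a natural candidate). For each vertex $\V{s}^{(j)}\in S^{(1)}_k$, transport a base objective $\V{c}$ through the symmetry group to obtain an objective $\V{c}^{(j)}$ under which $\V{s}^{(j)}$ is the unique optimum of $\mathcal{I}^{(j)}_k:=(\V{A}_k,\V{b}_k,\V{c}^{(j)})$. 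Scale the parameters so that $L_k := Low(\mathcal{I}^{(j)}_k)\to \infty$ in $k$; by symmetry $L_k$ will be independent of $j$.

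The bound itself will then follow from an information-theoretic pigeonhole. Since $\mathcal{A}$ is $\delta$-correct, the change-of-measure inequality of \cite{kaufmann2016complexity} (specialized to unit-variance Gaussians) gives, for any competing alternative $j'\neq j$,
\begin{equation*}
\sum_{i=1}^{n}\E_{\mathcal{I}^{(j)}_k}[N_i]\cdot\tfrac12\bigl(c^{(j)}_i-c^{(j')}_i\bigr)^2\;\ge\;\mathrm{kl}(1-\delta,\delta),
\end{equation*}
which already yields the $\Omega(L_k\ln\delta^{-1})$ contribution exactly as in Theorem~\ref{thm:instance_lowerbound}. To amplify the bound by $\ln|S^{(1)}_k|$, the $|S^{(1)}_k|$ alternatives will be treated as competing hypotheses that $\mathcal{A}$ must simultaneously distinguish. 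If $\mathcal{A}$ were to use $o\bigl(L_k\ln|S^{(1)}_k|\bigr)$ samples in expectation on every member of the family, summing or averaging the change-of-measure inequalities across all $|S^{(1)}_k|$ choices of $j'$ (equivalently, invoking a Fano-style inequality over the family of hypotheses) would leave $\mathcal{A}$ with too small a total ``distinguishing budget'' to be $\delta$-correct on every instance, a contradiction. Hence some index $j^{\star}(k)$ must realize the claimed bound, and setting $\mathcal{I}_k:=\mathcal{I}^{(j^{\star}(k))}_k$ and letting $k\to\infty$ produces the promised infinite sequence.

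The main obstacle will be the explicit symmetric construction in the second paragraph: the polytope and objective family must be chosen so that (i) $Low(\mathcal{I}^{(j)}_k)$ is genuinely invariant under the symmetry group, so the bound can be phrased via a single quantity $Low(\mathcal{I}_k)$; (ii) the pairwise KL divergences between the $\V{c}^{(j)}$'s are of the right order to yield a $\ln|S^{(1)}_k|$ factor in the Fano step without either making the alternatives trivially separable or violating the feasibility constraints in the definition of $Low$; and (iii) a scaling is available under which $L_k\to\infty$ while preserving the previous invariances. Verifying that the convex program defining $Low$ inherits the symmetry (so its optimum is truly the same across all $\V{c}^{(j)}$) is the key invariance check, and carefully tuning the gaps so that the Fano argument gives $\ln|S^{(1)}_k|$ rather than a smaller factor is the delicate piece of the argument.
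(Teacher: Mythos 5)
Your skeleton matches the paper's: fix a feasible region, index a family of instances by the extreme points via symmetric objectives, get the $\ln\delta^{-1}$ term from change-of-measure against a single alternative, and amplify to $\ln|S^{(1)}|$ by a pigeonhole over the $z$ alternatives (the paper's version of your ``Fano step'' is exactly this: $\delta$-correctness forces $\sum_{j'\neq j}\Pr[\mathcal{A}(\mathcal{I}_j)=\V{s}^{(j')}]\le\delta$, so some $j'$ has output probability at most $2\delta/z$ under $\mathcal{I}_j$, and one application of the change-of-distribution lemma to that pair gives $d(1-\delta,2\delta/z)=\Omega(\ln z+\ln\delta^{-1})$). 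However, the piece you defer as ``the main obstacle'' is where the entire difficulty lives, and the candidate constructions you name would not all deliver the theorem. Take the hypercube with objectives $c_i=\pm\Delta$: the pigeonholed alternative $j'$ may differ from $j$ in up to $n$ coordinates, so the left side of the change-of-measure inequality is only bounded by $2\Delta^2\,\mathbb{E}[T]$ while the right side is $\Omega(n+\ln\delta^{-1})$, giving $\mathbb{E}[T]=\Omega(n\Delta^{-2})$; but $Low$ for that instance is already $\Theta(n\Delta^{-2})$ (driven by the single-coordinate-flip constraints), so you gain no factor of $\ln|S^{(1)}|$ relative to $Low$. The simplex has the opposite problem: the change-of-measure mass concentrates on two coordinates, and you cannot conclude that \emph{all} coordinates are heavily sampled. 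Symmetry and transitivity alone are not enough; what is needed is that all pairs of competing vertices are separated by the \emph{same} order both in objective value and in Euclidean/Hamming distance, so that $Low(\mathcal{I})$ stays $\Theta(\Delta^{-2})$ (independent of the number of alternatives) while the per-coordinate KL stays $O(\Delta^2)$.

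The paper achieves this with a coding-theoretic construction you would need to supply: $z=2^{\Omega(n)}$ subsets $W_1,\dots,W_z\subseteq[n]$, each of size $l=\Theta(n)$, with pairwise intersections at most $l/2$ (a constant-weight code, shown to exist by the probabilistic method). The feasible region is the convex hull of the indicator vectors $\V{p}^{W_j}$, and the objective for instance $j$ is $+\Delta$ on $W_j$ and $-\Delta$ off it. Then every pair satisfies $\langle\V{c}^{W_j},\V{p}^{W_j}-\V{p}^{W_{j'}}\rangle=\Theta(\Delta l)$ and $\|\V{p}^{W_j}-\V{p}^{W_{j'}}\|^2=\Theta(l)$, so the uniform vector $\tau_i=\Theta(1/(l\Delta^2))$ is feasible for the program defining $Low$ and $Low(\mathcal{I})=\Theta(\Delta^{-2})$, while the pigeonhole gives $\mathbb{E}[T]\cdot 2\Delta^2\ge\Omega(\ln z+\ln\delta^{-1})$ with $\ln z=\Theta(n)=\Theta(\ln|S^{(1)}|)$. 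One further correction: the theorem fixes $n$, so you cannot let $|S^{(1)}_k|$ grow with $k$; the infinite sequence comes from keeping the polytope fixed and shrinking the gap, $\Delta_k=1/k$, which is what sends $Low(\mathcal{I}_k)\to\infty$.
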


\subsection{SUCCESSIVE ELIMINATION ALGORITHM}
    Before the description of the algorithm, we first define a function $LowAll(S, \varepsilon, \delta)$ that indicates  the number of samples we should take for each $c_i$, such that the difference in objective value between any two points in $S$ can be estimated to an accuracy $\varepsilon$ with probability $1-\delta$. Define $LowAll(S, \varepsilon, \delta)$ to be the optimal solution of the following convex program,
        \begin{align}
        \min_{\tau} \quad & \sum_{i=1}^n \tau_i \label{lowall}\\
        s.t.\quad & \sum_{i=1}^n \frac{(x_i -y_i)^2}{\tau_i} \le \frac{\varepsilon^2}{2 \ln(2/\delta)}, \forall \V{x}, \V{y} \in S \notag\\
         &\tau_i\ge 0, \forall i. \notag
        \end{align}
    Our algorithm starts with a set $S^{(1)}$ that contains all extreme points of the feasible region $\{\V{x}: \V{A}\V{x}\le \V{b}, \V{x}\ge \V{0}\}$, which is the set of all possible optimal solutions. We first draw samples so that the difference between each pairs in $S^{(1)}$ is estimated to accuracy $\varepsilon^{(1)}$.  Then we delete all points that are not optimal with high probability. In the next iteration, we halve the accuracy $\varepsilon^{(2)} = \varepsilon^{(1)}/2$ and repeat the process. The algorithm terminates when the set contains only one point.
 	\begin{algorithm}[H]                      
	\small
                \caption{A successive elimination algorithm}
                \label{unknowncAlgo}
                \begin{algorithmic}[1]
                    \STATE $S^{(1)}\gets$ set of all extreme points of feasible region $\{\V{x}: \V{A}\V{x}\le \V{b}, \V{x}\ge \V{0}\}$
                    \STATE $r\gets 1$
                    \STATE $\lambda\gets 10$
                    \WHILE{$|S^{(r)}|>1$}
                        \STATE $\varepsilon^{(r)} \gets 2^{-r}$, $\delta^{(r)} \gets \delta/(10 r^2 |S^{(1)}|^2)$
                        \STATE $(t^{(r)}_1, \dots, t^{(r)}_n) \gets LowAll(S^{(r)}, \varepsilon^{(r)}/\lambda, \delta^{(r)})$
                        \STATE Sample $c_i$ for $t^{(r)}_i$ times. Let $\widehat{c}^{(r)}_i$ be its empirical mean
                        \STATE Let $\V{x}^{(r)}$ be the optimal solution in $S^{(r)}$ with respect to $\widehat{\V{c}}^{(r)}$
                        \STATE Eliminate the points in $S^{(r)}$ that are $\varepsilon^{(r)}/2+2\varepsilon^{(r)}/\lambda$ worse than $\V{x}^{(r)}$ when the objective function is $ \widehat{\V{c}}^{(r)}$,
                        \begin{align} \label{eliminateCond}
                        	S^{(r+1)} \gets & \{ \V{x}\in S^{(r)}: \langle \V{x}, \widehat{\V{c}}^{(r)} \rangle \notag \\
						& \ge \langle \V{x}^{(r)},  \widehat{\V{c}}^{(r)} \rangle-\varepsilon^{(r)}/2-2\varepsilon^{(r)}/\lambda\}
			\end{align}
                        \STATE $r \gets r+1$
                    \ENDWHILE
                    \STATE Output $\V{x} \in S^{(r)}$
                \end{algorithmic}
                \end{algorithm}
The algorithm has the following sample complexity bound.
    \begin{restatable}[Sample complexity of Algorithm~\ref{unknowncAlgo}]{theorem}{unknowncSample}
     \label{thm:unknownc}
    For the AIALO with unknown-$\V{c}$ problem, Algorithm \ref{unknowncAlgo} is $\delta$-correct and, on instance $\mathcal{I}$, with probability $1-\delta$ draws at most the following number of samples:
     \[
        O\left( Low(\mathcal{I}) \ln \Delta^{-1} ( \ln |S^{(1)}| + \ln \delta^{-1} + \ln \ln \Delta^{-1})\right),
     \]
     where $S^{(1)}$ is the set of all extreme points of the feasible region and $\Delta$ is the gap in objective value between the optimal extreme point and the second-best,
     \[
     	\Delta = \max_{\Sol \in S^{(1)}} \V{c}^T \V{x} - \max_{\V{x} \in S^{(1)} \setminus \Sol^*} \V{c}^T \V{x} .
     \]
    \end{restatable}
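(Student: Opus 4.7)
The plan is to establish a high-probability ``good event'' under which all empirical estimates are accurate, use it to argue both correctness and termination in $O(\log \Delta^{-1})$ rounds, and then bound the per-round sample cost by scaling the optimizer of the $Low(\mathcal{I})$ convex program.

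First I would define the good event $\mathcal{E}$: at every round $r$ and for every pair $\V{x}, \V{y} \in S^{(r)}$, the empirical estimate satisfies $|(\widehat{\V{c}}^{(r)} - \V{c})^\transpose (\V{x} - \V{y})| \le \varepsilon^{(r)}/\lambda$. By Gaussian concentration (Lemma~\ref{gaussianLem}) applied with the sample allocation $t^{(r)}_i$ returned by $LowAll(S^{(r)}, \varepsilon^{(r)}/\lambda, \delta^{(r)})$, each fixed pair fails with probability at most $\delta^{(r)}$; a union bound over at most $|S^{(1)}|^2$ pairs and all rounds gives $\Pr[\mathcal{E}^c] \le \sum_r |S^{(1)}|^2 \delta^{(r)} = \sum_r \delta/(10 r^2) \le \delta$. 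Conditioning on $\mathcal{E}$, induction shows $\V{x}^* \in S^{(r)}$ for every $r$: since $\V{c}^\transpose(\V{x}^{(r)} - \V{x}^*) \le 0$ by optimality of $\V{x}^*$ and the empirical gap is off by at most $\varepsilon^{(r)}/\lambda$, the elimination threshold $-\varepsilon^{(r)}/2 - 2\varepsilon^{(r)}/\lambda$ in (\ref{eliminateCond}) is never crossed for $\V{x}^*$. Combining accuracy with (\ref{eliminateCond}) in the reverse direction yields a matching shrinkage: any $\V{x} \in S^{(r+1)} \setminus \{\V{x}^*\}$ satisfies $\V{c}^\transpose(\V{x}^* - \V{x}) \le \varepsilon^{(r)}/2 + 4\varepsilon^{(r)}/\lambda = O(\varepsilon^{(r)})$, so once $\varepsilon^{(r)}$ falls below a constant multiple of $\Delta$ only $\V{x}^*$ survives and the loop terminates after $R = O(\log\Delta^{-1})$ rounds.

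The heart of the argument is the per-round bound $LowAll(S^{(r)}, \varepsilon^{(r)}/\lambda, \delta^{(r)}) \le O(Low(\mathcal{I})\cdot \ln(1/\delta^{(r)}))$. My plan is to take an optimizer $\tau^*$ of $Low(\mathcal{I})$ and exhibit $\tau = C\tau^*$ as feasible for the $LowAll$ program for $C = \Theta(\lambda^2 \ln(1/\delta^{(r)}))$. For any $\V{x}, \V{y} \in S^{(r)}$, applying $(a+b)^2 \le 2a^2 + 2b^2$ coordinatewise to $x_i - y_i = (x_i - x_i^*) + (x_i^* - y_i)$ and then invoking the $Low$ constraints at $\V{s}=\V{x}$ and $\V{s}=\V{y}$ gives
\[
\sum_i \frac{(x_i - y_i)^2}{\tau_i} \le \frac{2}{C}\left[(\V{c}^\transpose(\V{x}^*-\V{x}))^2 + (\V{c}^\transpose(\V{x}^*-\V{y}))^2\right].
\]
The shrinkage step bounds both gaps by $O(\varepsilon^{(r)})$ for $r \ge 2$, so the right-hand side is $O(\varepsilon^{(r)2})/C$, which meets the $LowAll$ threshold $(\varepsilon^{(r)}/\lambda)^2/(2\ln(2/\delta^{(r)}))$ for the stated $C$. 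Thus $\sum_i \tau_i \le C\cdot Low(\mathcal{I}) = O(Low(\mathcal{I})\ln(1/\delta^{(r)}))$. Summing over $r$ with $\ln(1/\delta^{(r)}) = O(\ln|S^{(1)}| + \ln\delta^{-1} + \ln r)$ and $R = O(\log\Delta^{-1})$ telescopes to the claimed bound.

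The main obstacle I anticipate is the very first round, where $S^{(1)}$ contains all extreme points and the shrinkage bound $O(\varepsilon^{(r-1)})$ on gaps is not yet available. Handling round $1$ separately will require using the a-priori norm bound $R$ on solutions (together with the problem's bound on $\Vert\V{c}\Vert$) to cap the initial gaps by an absolute constant, or absorbing the round-$1$ cost into the scaling constant $\lambda$; either way, one must verify that this one-time inflation fits inside the stated asymptotic bound.
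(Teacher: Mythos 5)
Your proposal follows essentially the same route as the paper's proof: the same good event with the same union bound over pairs and rounds, the same two conditional claims (the optimum is never eliminated; every survivor of round $r$ has objective gap $O(\varepsilon^{(r)})$, forcing termination within $O(\log\Delta^{-1})$ rounds), and the same key step of scaling the optimizer of $Low(\mathcal{I})$ by $\Theta(\lambda^{2}\ln(1/\delta^{(r)}))$ and using $(a+b)^2\le 2a^2+2b^2$ to exhibit a feasible point for the $LowAll$ program. The round-$1$ issue you flag is genuine and is in fact glossed over by the paper itself, which applies the gap bound $\varepsilon^{(r-1)}$ at $r=1$ where no such bound has been established; your suggested patch via the a-priori bound on the feasible region (or absorbing the one-time cost into the constants) is the right way to close it.
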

\begin{proof}[Proof Sketch:]
  We prove that conditioning on a good event $\mathcal{E}$ that holds with probability at least $1-\delta$, the algorithm will not delete the optimal solution and will terminate before $\lfloor \log(\Delta^{-1})\rfloor + 1$ iterations. Then we bound the number of samples used in  iteration $r$ by showing that the optimal solution of $Low(\mathcal{I})$ times $\alpha^{(r)} = 32 \lambda^2 \ln(2/\delta^{(r)})$ is a feasible solution of the convex program that defines $LowAll(S^{(r)}, \varepsilon^{(r)}/\lambda, \delta^{(r)})$. Therefore the number of samples used in iteration $r$ is no more than $\alpha^{(r)} Low(\mathcal{I})$.
\end{proof}
This matches the worst-case lower bound within a problem-dependent factor $\ln(1/\Delta)$. 
Notice, however, that the size of $|S^{(1)}|$ can be exponentially large, and so is the size of the convex program~\eqref{lowall}. 
So Algorithm \ref{unknowncAlgo} is computationally inefficient if implemented straightforwardly, and it remains open whether the algorithm can be implemented in polynomial time or an alternative algorithm with similar sample complexity and better performance can be found.

\section{EXPERIMENTS}

In this section, we investigate the empirical number of samples used by Ellipsoid-UCB algorithm for the unknown-$\V{b}$ case of AIALO. We fix $\delta = 0.1$ and focus on the impact of the other parameters\footnote{$99.5$ percent of the outputs turn out to satisfy relaxed feasibility and relaxed optimality.}, which are more interesting.

 We compare three algorithms on randomly generated LP problems.
The first is Ellipsoid-UCB.
The second is the naive ``static approach'', namely, draw $4\sigma^2\log(m/\delta)/\varepsilon_2^2$ samples of each constraint, then solve the LP using estimated means of the parameters.
      (This is the same approach mentioned in the previous section, except that previously we discussed the case $\sigma=1$ for simplicity.)
The third is designed to intuitively match the lower bound of Theorem \ref{lb2:unknownb}: Draw $4\sigma^2\log(d/\delta)/\varepsilon_2^2$ samples of each of only the binding constraints, where there are $d$ of them, then solve the LP using estimated means of the $b_i$.
      (For a more fair comparison, we use the same tail bound to derive the number of samples needed for high confidence, so that the constants match more appropriately.)

 We generate instances as follows. $\V{c}$ is sampled from $[-10, 10]^n$ uniformly at random. $\V{b}$ is uniformly drawn from $[0,10]^n$. Each $\V{A}_i$ is sampled from unit ball uniformly at random. Notice that the choice of $b_i\ge 0$ guarantees feasibility because the origin is always a feasible solution. We also add additional constraints $x_i \le 500$ to make sure that the LP generated is bounded. When the algorithm makes an observation, a sample is drawn from Gaussian distribution with variance $\sigma^2$.

  In Figure \ref{fig:compare-static}, each algorithm's number of samples (average of $50$ instances) is plotted as function of different parameters. The number of samples used by Ellipsoid-UCB is proportional to $n$, $\sigma^2$ and $\varepsilon^{-2}$. However, it does not change much as $m$ increases.\footnote{Indeed, the standard ellipsoid algorithm for linear programming requires a number of iterations that is bounded in terms of the number of variables regardless of the number of constraints.} This will not be surprising if ellipsoid uses most of its samples on binding constraints, just as the lower bound does.
  This is shown in Table \ref{table}, where it can be seen that Ellipsoid-UCB requires much fewer samples of non-binding constraints than binding constraints.
\begin{figure}[!h]
\subfigure[]{
\label{fig2a} 
\includegraphics[width=0.48\linewidth, height = 0.48\linewidth]{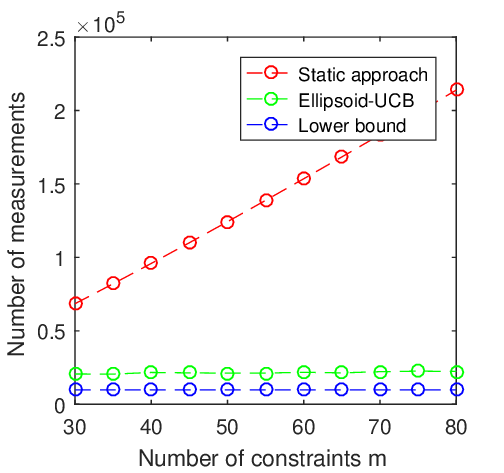}}
\subfigure[]{
\label{fig2b} 
\includegraphics[width=0.48\linewidth, height = 0.48\linewidth]{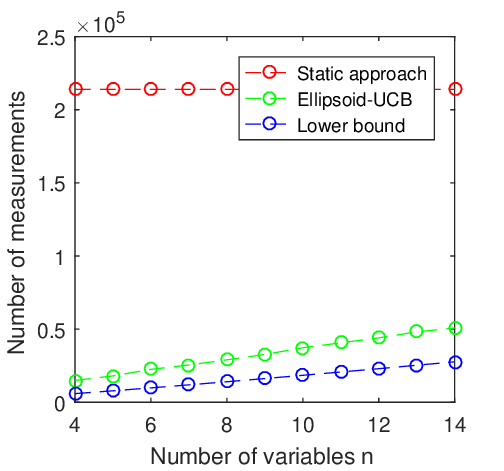}}

\subfigure[]{
\label{fig2c} 
\includegraphics[width=0.48\linewidth, height = 0.48\linewidth]{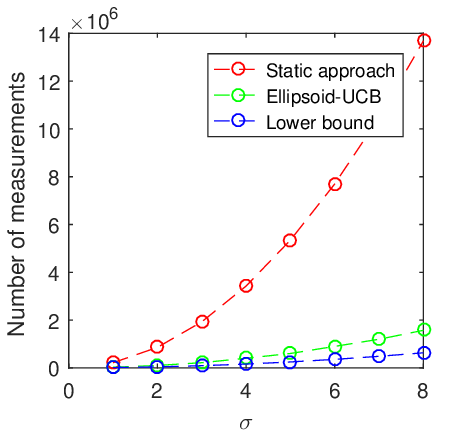}}
\subfigure[]{
\label{fig2d} 
\includegraphics[width=0.48\linewidth, height = 0.48\linewidth]{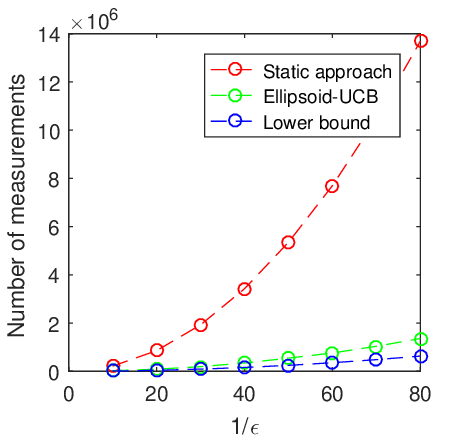}}

\caption{\small Number of samples as we vary $m$, $n$, $\sigma$ and $1/\varepsilon$. Every data point is the mean of 50 randomly drawn problem instances. The baseline parameters are $m=80$, $n=6$, $\sigma =1$, $\varepsilon_1 = \varepsilon_2 = 0.1$. In figure (d), $\varepsilon_1 =\varepsilon_2 =\varepsilon$. }
\label{fig:compare-static} 
\end{figure}

\begin{table}[!h]
\centering
\small
    \begin{tabular}{|c|c|c|}
    \hline
     &Binding & Non-binding\\
    \hline
    Static approach & 2674 & 2674 \\
    \hline
    Ellipsoid-UCB    &  3325 & 11.7\\
    \hline
    Lower bound & 1476 & 0\\
    \hline
\end{tabular}

\caption{\small Average number of samples used per binding constraint and per non-binding constraint. Numbers are average from $100$ trials. Here, $m=80$, $n=4$, $\sigma=1$ $\varepsilon_1 =\varepsilon_2 = 0.1$.}

\label{table}
\end{table}

 Figure \ref{fig:cdfs} addresses the variance in the number of samples drawn by Ellipsoid-UCB by plotting its empirical CDF over $500$ random trials. The horizontal axis is the ratio of samples required by Ellipsoid-UCB to those of the lower bound.
For comparison, we also mention $R$, the ratio between the performances of the static approach and the lower bound.
These results suggest that the variance is quite moderate, particularly when the total number of samples needed grows.


\begin{figure}[!h]
\centering
\includegraphics[width=0.5\textwidth]{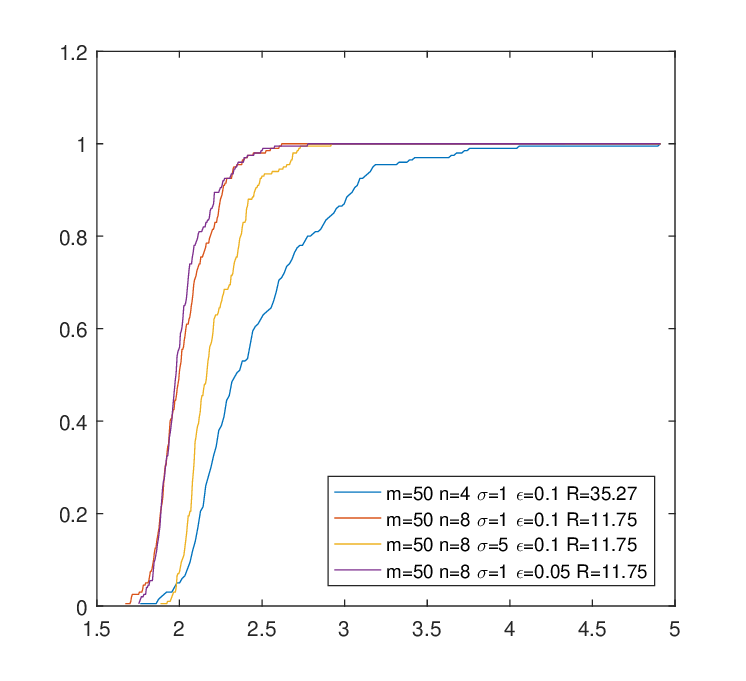}
\caption{\small Empirical cumulative distribution function of Ellipsoid-UCB's number of samples, in units of the ``lower bound'', over $500$ trials. Note that the lower bound varies when parameters change. $R = \frac{m\log(m)}{d\log(d)}$ is the ratio between the number of samples used by static approach and lower bound.}
\label{fig:cdfs}
\end{figure}

\section{DISCUSSION AND FUTURE WORK} \label{sec:discussion}
\sherry{Added the sample complexity of RL here}
One question is whether we can extend our results to situations when the constraint matrix $\V{A}$ is unknown as well as $\V{b}$. The goal is again to solve the problem with few total observations. This extended model will allow us to study a wider range of problems. For example, the sample complexity problem in Reinforcement Learning studied by \cite{kakade2003sample, DBLP:journals/corr/abs-1710-06100, chen2016stochastic} is a special case of our AIALO problem with unknown $\V{A}$ and $\V{b}$.

A second extension to the model would allow algorithms access to varying qualities of samples for varying costs.
For instance, perhaps some crowd workers can give very low-variance estimates for high costs, while some workers can give cheaper estimates, but have larger variance.
In this case, some preliminary theoretical investigations suggest picking the worker that minimizes the product (price)(variance).
A direction for future work is for the algorithm to select samples dynamically depending on the payment-variance tradeoffs currently available.
A final interested direction is a more mechanism-design approach where the designer collects bids from the agents and selects a winner whose data is used to update the algorithm.

\subsubsection*{Acknowledgements}
The authors are grateful to anonymous reviewers for their thorough reviews. This work is supported in part by National Science Foundation under grant CCF-1718549 and the Harvard Data Science Initiative. 

\vfill
\break

\bibliographystyle{unsrtnat}
\bibliography{references}

\begin{thebibliography}{26}
\providecommand{\natexlab}[1]{#1}
\providecommand{\url}[1]{\texttt{#1}}
\expandafter\ifx\csname urlstyle\endcsname\relax
  \providecommand{\doi}[1]{doi: #1}\else
  \providecommand{\doi}{doi: \begingroup \urlstyle{rm}\Url}\fi

\bibitem[Heyman and Sobel(2003)]{heyman2003stochastic}
Daniel~P Heyman and Matthew~J Sobel.
\newblock \emph{Stochastic models in operations research: stochastic
  optimization}, volume~2.
\newblock Courier Corporation, 2003.

\bibitem[Neely(2010)]{neely2010stochastic}
Michael~J Neely.
\newblock Stochastic network optimization with application to communication and
  queueing systems.
\newblock \emph{Synthesis Lectures on Communication Networks}, 3\penalty0
  (1):\penalty0 1--211, 2010.

\bibitem[Ben-Tal et~al.(2009)Ben-Tal, El~Ghaoui, and Nemirovski]{ben2009robust}
Aharon Ben-Tal, Laurent El~Ghaoui, and Arkadi Nemirovski.
\newblock \emph{Robust optimization}.
\newblock Princeton University Press, 2009.

\bibitem[Bertsimas et~al.(2004)Bertsimas, Pachamanova, and
  Sim]{bertsimas2004robust}
Dimitris Bertsimas, Dessislava Pachamanova, and Melvyn Sim.
\newblock Robust linear optimization under general norms.
\newblock \emph{Operations Research Letters}, 32\penalty0 (6):\penalty0
  510--516, 2004.

\bibitem[Braziunas(2006)]{braziunas2006computational}
Darius Braziunas.
\newblock Computational approaches to preference elicitation.
\newblock Technical report, 2006.

\bibitem[Blum et~al.(2004)Blum, Jackson, Sandholm, and
  Zinkevich]{blum2004preference}
Avrim Blum, Jeffrey Jackson, Tuomas Sandholm, and Martin Zinkevich.
\newblock Preference elicitation and query learning.
\newblock \emph{Journal of Machine Learning Research}, 5\penalty0
  (Jun):\penalty0 649--667, 2004.

\bibitem[O’Connell et~al.(2002)O’Connell, O’Sullivan, and
  Freuder]{oconnell2002strategies}
Sarah O’Connell, Barry O’Sullivan, and Eugene~C Freuder.
\newblock Strategies for interactive constraint acquisition.
\newblock In \emph{Proceedings of the CP-2002 Workshop on User-Interaction in
  Constraint Satisfaction}, pages 62--76, 2002.

\bibitem[Bessiere et~al.(2015)Bessiere, Koriche, Lazaar, and
  O'Sullivan]{bessiere2015constraint}
Christian Bessiere, Fr{\'e}d{\'e}ric Koriche, Nadjib Lazaar, and Barry
  O'Sullivan.
\newblock Constraint acquisition.
\newblock \emph{Artificial Intelligence}, 2015.

\bibitem[Balcan et~al.(2006)Balcan, Beygelzimer, and
  Langford]{balcan2006agnostic}
Maria-Florina Balcan, Alina Beygelzimer, and John Langford.
\newblock Agnostic active learning.
\newblock In \emph{ICML}, 2006.

\bibitem[Balcan et~al.(2007)Balcan, Broder, and Zhang]{balcan2007margin}
Maria-Florina Balcan, Andrei Broder, and Tong Zhang.
\newblock Margin based active learning.
\newblock In \emph{COLT}, 2007.

\bibitem[Castro and Nowak(2008)]{castro2008minimax}
Rui~M Castro and Robert~D Nowak.
\newblock Minimax bounds for active learning.
\newblock \emph{Information Theory, IEEE Transactions on}, 54\penalty0
  (5):\penalty0 2339--2353, 2008.

\bibitem[Kakade et~al.(2003)]{kakade2003sample}
Sham~Machandranath Kakade et~al.
\newblock \emph{On the sample complexity of reinforcement learning}.
\newblock PhD thesis, University of London London, England, 2003.

\bibitem[Lattimore and Hutter(2012)]{lattimore2012pac}
Tor Lattimore and Marcus Hutter.
\newblock Pac bounds for discounted mdps.
\newblock In \emph{International Conference on Algorithmic Learning Theory},
  pages 320--334. Springer, 2012.

\bibitem[Azar et~al.(2012)Azar, Munos, and Kappen]{azar2012sample}
Mohammad~Gheshlaghi Azar, R{\'e}mi Munos, and Bert Kappen.
\newblock On the sample complexity of reinforcement learning with a generative
  model.
\newblock \emph{arXiv preprint arXiv:1206.6461}, 2012.

\bibitem[Wang(2017)]{DBLP:journals/corr/abs-1710-06100}
Mengdi Wang.
\newblock Primal-dual {\(\pi\)} learning: Sample complexity and sublinear run
  time for ergodic markov decision problems.
\newblock \emph{CoRR}, abs/1710.06100, 2017.
\newblock URL \url{http://arxiv.org/abs/1710.06100}.

\bibitem[Chen and Wang(2016)]{chen2016stochastic}
Yichen Chen and Mengdi Wang.
\newblock Stochastic primal-dual methods and sample complexity of reinforcement
  learning.
\newblock \emph{arXiv preprint arXiv:1612.02516}, 2016.

\bibitem[Balkanski et~al.(2016)Balkanski, Rubinstein, and
  Singer]{balkanski2016power}
Eric Balkanski, Aviad Rubinstein, and Yaron Singer.
\newblock The power of optimization from samples.
\newblock In \emph{Advances in Neural Information Processing Systems}, pages
  4017--4025, 2016.

\bibitem[Chen et~al.(2014)Chen, Lin, King, Lyu, and
  Chen]{chen2014combinatorial}
Shouyuan Chen, Tian Lin, Irwin King, Michael~R Lyu, and Wei Chen.
\newblock Combinatorial pure exploration of multi-armed bandits.
\newblock In \emph{Advances in Neural Information Processing Systems}, pages
  379--387, 2014.

\bibitem[Gabillon et~al.(2016)Gabillon, Lazaric, Ghavamzadeh, Ortner, and
  Bartlett]{gabillon2016improved}
Victor Gabillon, Alessandro Lazaric, Mohammad Ghavamzadeh, Ronald Ortner, and
  Peter Bartlett.
\newblock Improved learning complexity in combinatorial pure exploration
  bandits.
\newblock In \emph{Proceedings of the 19th International Conference on
  Artificial Intelligence and Statistics}, pages 1004--1012, 2016.

\bibitem[Chen et~al.(2016{\natexlab{a}})Chen, Hu, Li, Li, Liu, and
  Lu]{chen2016combinatorial}
Wei Chen, Wei Hu, Fu~Li, Jian Li, Yu~Liu, and Pinyan Lu.
\newblock Combinatorial multi-armed bandit with general reward functions.
\newblock In \emph{Advances in Neural Information Processing Systems}, pages
  1659--1667, 2016{\natexlab{a}}.

\bibitem[Chen et~al.(2016{\natexlab{b}})Chen, Gupta, and Li]{chen2016pure}
Lijie Chen, Anupam Gupta, and Jian Li.
\newblock Pure exploration of multi-armed bandit under matroid constraints.
\newblock In \emph{Conference on Learning Theory}, pages 647--669,
  2016{\natexlab{b}}.

\bibitem[Chen et~al.(2017)Chen, Gupta, Li, Qiao, and Wang]{chen17nearly}
Lijie Chen, Anupam Gupta, Jian Li, Mingda Qiao, and Ruosong Wang.
\newblock Nearly optimal sampling algorithms for combinatorial pure
  exploration.
\newblock In \emph{Proceedings of the 2017 Conference on Learning Theory},
  pages 482--534, 2017.

\bibitem[Jamieson and Nowak(2014)]{jamieson2014best}
Kevin Jamieson and Robert Nowak.
\newblock Best-arm identification algorithms for multi-armed bandits in the
  fixed confidence setting.
\newblock In \emph{Information Sciences and Systems (CISS), 2014 48th Annual
  Conference on}, pages 1--6. IEEE, 2014.

\bibitem[Jamieson et~al.(2014)Jamieson, Malloy, Nowak, and
  Bubeck]{jamieson2014lil}
Kevin~G Jamieson, Matthew Malloy, Robert~D Nowak, and S{\'e}bastien Bubeck.
\newblock lil'ucb: An optimal exploration algorithm for multi-armed bandits.
\newblock In \emph{COLT}, volume~35, pages 423--439, 2014.

\bibitem[Kaufmann et~al.(2016)Kaufmann, Capp{\'e}, and
  Garivier]{kaufmann2016complexity}
Emilie Kaufmann, Olivier Capp{\'e}, and Aur{\'e}lien Garivier.
\newblock On the complexity of best arm identification in multi-armed bandit
  models.
\newblock \emph{Journal of Machine Learning Research}, 17\penalty0
  (1):\penalty0 1--42, 2016.

\bibitem[Duchi()]{duchi2007derivations}
John Duchi.
\newblock Derivations for linear algebra and optimization.

\end{thebibliography}
\vfill
\break

\appendix

\section{Motivating Examples} \label{app:motivate}
Our settings can be used to model many real-world optimization problems. In this section, we expand on some example real-world problems that fall into our framework:
\begin{itemize}
\item[(i)] A company who wants to decide a  production plan to maximize profit faces a linear program. But when entering a new market, the company may not initially know the average unit price $c_i$ for their different products in this market. Sampling $c_i$ corresponds to surveying a consumer on his willingness to buy product $i$.
\item[(ii)] A delivery company who wants to plan driver routes may not know the average traffic/congestion of road segments. Each segment $e$ (an edge in the graph) has a true average travel time $b_e$, but any given day it is $b_e$ + noise. One can formulate shortest paths as an LP where $\V{b}$ is the vector of edge lengths. Sampling $b_e$ corresponds to sending an observer to the road segment $e$ to observe how long it takes to traverse on a given day. 
\item[(iii)] A ride sharing company (e.g. Uber) wants to decide a set of prices for rides request but it may not know customers' likelihood of accepting the prices $c_i$. Sampling $c_i$ in this setting corresponds to posting different prices to collect information. 
\item[(iv)] For the purpose of recommending the best route in real time, a navigation App, e.g., Waze\footnote{www.waze.com}, may want to collect traffic information or route information from distributed driver via their App. 
\end{itemize}

%
%
%
%
%
%

\section{Change of Distribution Lemma}
Some of our lower bound proofs are based on the work \cite{chen17nearly}. For self-containedness, we restate some of the lemmas in \cite{chen17nearly}.

A key element to derive the lower bounds is the Change of Distribution lemma, which was first formulated in \cite{kaufmann2016complexity} to study best arm identification problem in multi-armed bandit model. The lemma provides a general relation between the expected number of draws and Kullback-Leibler divergences of the arms distributions. The core elements of the model that the lemma can be applied to are almost the same as the classical bandit model. We will state it here and explain the applicability of our setting. In the bandit model, there are $n$ arms, with each of them being characterized by an unknown distribution $\nu_i, i=1,2,...,n.$ The bandit model consists of a sequential strategy that selects an arm $a_t$ at time $t$. Upon selection, arm $a_t$ reveals a reward $z_t$ generated from its corresponding distribution. The rewards for each arm form an i.i.d. sequence. The selection strategy/algorithm invokes a stopping time $T$ when the algorithm will terminate and output a solution. To present the lemma, we define filtration $(\mathcal{F}_t)_{t\ge 0}$ with
 $\mathcal{F}_t = \sigma(a_1, z_1, \dots, a_t, z_t)$.
  
  If we consider a LP instance $\mathcal{I}$ with unknown parameters $d$ as a bandit model, an unknown parameter $d_i$ will correspond to an arm $a$ in the Lemma~\ref{ChangeDistr}, and thus $\nu_a$ is just the Gaussian distribution with mean $d_i$ and variance $1$ (both being unknown). Each step, we select one coefficient to sample with. Then we will be able to safely apply the Change of Distribution lemma to our setting. The lemma can be stated as follows in our setting of the problem.


\begin{lemma} \label{lemma:change_distr}
  Let $\mathcal{A}$ be a $(\delta,\epsilon_1,\epsilon_2)$-correct algorithm with $\delta \in (0, 0.1)$.
  Let $\mathcal{I},\mathcal{I}'$ be two LP instances that are equal on all known parameters, and let $d,d'$ be their respective vectors of unknown parameters.
  Suppose each instance has samples distributed Gaussian with variance $1$.
  Suppose $OPT(\mathcal{I}; \epsilon_1,\epsilon_2)$ and $OPT(\mathcal{I}'; \epsilon_1,\epsilon_2)$ are disjoint.
  Then letting $\tau_i$ be the number of samples $\mathcal{A}$ draws for parameter $d_i$ on input $\mathcal{I}$, we have
    \[ \mathbb{E} \sum_i \tau_i(d_i - d_i')^2 \geq 0.8 \ln\frac{1}{\delta} . \]
\end{lemma}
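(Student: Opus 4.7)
The plan is to follow the classical change-of-measure argument of Kaufmann--Cappé--Garivier (\cite{kaufmann2016complexity}) adapted to the AIALO sampling model, which fits cleanly since each step the algorithm adaptively picks a ``coordinate'' $i$ and observes an i.i.d.\ Gaussian draw of mean $d_i$ (or $d_i'$ under $\mathcal{I}'$) and variance $1$. Let $\mathbb{P}_{\mathcal{I}}$ and $\mathbb{P}_{\mathcal{I}'}$ denote the probability measures induced on the full transcript $(a_1,z_1,\dots,a_T,z_T,\text{output})$ when $\mathcal{A}$ is run on $\mathcal{I}$ and $\mathcal{I}'$ respectively, where $T=\sum_i\tau_i$ is the (random) stopping time and $\mathcal{F}_T$ is the corresponding stopped filtration.

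First I would write the log-likelihood ratio
\[
L_T \;=\; \sum_{t=1}^{T}\ln\frac{\varphi(z_t-d_{a_t})}{\varphi(z_t-d'_{a_t})},
\]
where $\varphi$ is the standard normal density. Because the action $a_t$ is $\mathcal{F}_{t-1}$-measurable and conditional on $a_t=i$ the increment has mean $\mathrm{KL}(\mathcal{N}(d_i,1)\Vert \mathcal{N}(d_i',1))=\tfrac{1}{2}(d_i-d_i')^2$ under $\mathbb{P}_\mathcal{I}$, Wald's identity (a standard optional-stopping argument applied to the martingale $L_t - \tfrac12\sum_{s\le t}(d_{a_s}-d'_{a_s})^2$) yields
\[
\mathbb{E}_{\mathcal{I}}[L_T] \;=\; \tfrac{1}{2}\sum_i \mathbb{E}_{\mathcal{I}}[\tau_i](d_i-d_i')^2.
\]

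Next, I would invoke the data-processing inequality: for any event $\mathcal{E}\in\mathcal{F}_T$,
\[
\mathbb{E}_{\mathcal{I}}[L_T] \;\geq\; d_{\mathrm{KL}}\!\bigl(\mathbb{P}_{\mathcal{I}}(\mathcal{E}),\,\mathbb{P}_{\mathcal{I}'}(\mathcal{E})\bigr),
\]
where $d_{\mathrm{KL}}(p,q)=p\ln(p/q)+(1-p)\ln((1-p)/(1-q))$ is the binary KL. I pick $\mathcal{E}$ to be the event ``the output lies in $OPT(\mathcal{I};\varepsilon_1,\varepsilon_2)$.'' By $(\delta,\varepsilon_1,\varepsilon_2)$-correctness on $\mathcal{I}$, $\mathbb{P}_{\mathcal{I}}(\mathcal{E})\geq 1-\delta$. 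Because $OPT(\mathcal{I};\varepsilon_1,\varepsilon_2)$ and $OPT(\mathcal{I}';\varepsilon_1,\varepsilon_2)$ are disjoint, any output in $OPT(\mathcal{I};\varepsilon_1,\varepsilon_2)$ is necessarily outside $OPT(\mathcal{I}';\varepsilon_1,\varepsilon_2)$, so by correctness on $\mathcal{I}'$, $\mathbb{P}_{\mathcal{I}'}(\mathcal{E})\leq\delta$.

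Finally, using monotonicity of $d_{\mathrm{KL}}(p,q)$ in $p$ (for $p>q$) and anti-monotonicity in $q$ gives
\[
d_{\mathrm{KL}}\!\bigl(\mathbb{P}_{\mathcal{I}}(\mathcal{E}),\,\mathbb{P}_{\mathcal{I}'}(\mathcal{E})\bigr) \;\geq\; d_{\mathrm{KL}}(1-\delta,\delta) \;=\; (1-2\delta)\ln\!\tfrac{1-\delta}{\delta}.
\]
For $\delta\in(0,0.1)$ a direct calculation gives $(1-2\delta)\ln\frac{1-\delta}{\delta} \geq 0.4\ln(1/\delta)$, so combining with the Wald identity yields
\[
\tfrac{1}{2}\sum_i \mathbb{E}_{\mathcal{I}}[\tau_i](d_i-d_i')^2 \;\geq\; 0.4\ln(1/\delta),
\]
which rearranges to the stated bound $\mathbb{E}\sum_i\tau_i(d_i-d_i')^2\geq 0.8\ln(1/\delta)$. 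The main subtlety is justifying the optional-stopping step when $T$ is a data-dependent stopping time driven by an adaptive policy; this is handled by the standard argument that the conditional distribution of $z_t$ given $(\mathcal{F}_{t-1},a_t)$ depends only on $a_t$ (no dependence on the policy), so the per-step KL contribution is exactly as above, and uniform integrability holds because $\mathbb{E}_{\mathcal{I}}[T]<\infty$ for a correct algorithm with bounded-variance increments. Everything else is bookkeeping.
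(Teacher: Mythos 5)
Your proof is correct and follows essentially the same route as the paper: the paper invokes the Kaufmann--Capp\'e--Garivier change-of-distribution lemma as a black box, computes $\mathrm{KL}(\mathcal{N}(d_i,1),\mathcal{N}(d_i',1))=\tfrac12(d_i-d_i')^2$, chooses the same event $\mathcal{E}$ (output lands in $OPT(\mathcal{I};\varepsilon_1,\varepsilon_2)$), and uses the same bound $d(1-\delta,\delta)\ge 0.4\ln\delta^{-1}$. The only difference is that you additionally sketch the internals of that lemma (Wald/optional stopping plus data processing), which the paper simply cites.
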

\begin{proof}
  We use a result on bandit algorithms by \cite{kaufmann2016complexity}, which is restated as follows.
  \begin{lemma}[\cite{kaufmann2016complexity}] \label{ChangeDistr}
  Let $\nu$ and $\nu'$ be two bandit models with $n$ arms such that for all arm $a$, the distribution $\nu_a$ and $\nu_a'$ are mutually absolutely continuous. For any almost-surely finite stopping time $T$ with respect to $(\mathcal{F}_t)$,
  \[
  \sum_{i=1}^n \mathbb{E}_{\nu}[N_a(T)] KL(\nu_a, \nu_a') \ge \sup_{\mathcal{E} \in \mathcal{F}_T} d(\Pr_\nu(\mathcal{E}), \Pr_{\nu'}(\mathcal{E})),
  \]
  where $d(x,y) = x \ln (x/y)+ (1-x)\ln\left((1-x)/(1-y)\right)$ is the binary relative entropy function, $N_a(T)$ is the number of samples drawn on arm $a$ before time $T$ and $KL(\nu_a, \nu_{a'})$ is the KL-divergence between distribution $\nu_a$ and $\nu_{a'}$.
  \end{lemma}
Let $\mathcal{I}$ and $\mathcal{I}'$ be the two bandit models in Lemma~\ref{ChangeDistr}. Applying above lemma we have
  \begin{align*}
  &\sum_{i=1}^n \mathbb{E}_{\mathcal{A}, \mathcal{I}}[\tau_i] KL(\mathcal{N}(d_i, 1), \mathcal{N}(d_i',1)) \\
  \ge & d(\Pr_{\mathcal{A}, \mathcal{I}}(\mathcal{E}), \Pr_{\mathcal{A}, \mathcal{I}'}(\mathcal{E})), \textrm{ for all } \mathcal{E} \in \mathcal{F}_T,
  \end{align*}
where $\mathcal{N}(\mu, \sigma)$ is the Gaussian distribution with mean $\mu$ and variance $\sigma$,  $\Pr_{\mathcal{A}, \mathcal{I}}[\mathcal{E}]$ is the probability of event $\mathcal{E}$ when algorithm $\mathcal{A}$ is given input $\mathcal{I}$, and $\mathbb{E}_{\mathcal{A}, \mathcal{I}}[X]$ is the expected value of random variable $X$ when algorithm $\mathcal{A}$ is given input $\mathcal{I}$.  According to the result in~\cite{duchi2007derivations}, the KL-divergence for two Gaussian distribution with mean $\mu_1,\mu_2$ and variance $\sigma_1, \sigma_2$ is equal to 
\[
\log \frac{\sigma_2}{\sigma_1} + \frac{ \sigma_1^2 + (\mu_1 - \mu_2)^2}{2 \sigma_2^2}.
\]
Thus we have $KL(\mathcal{N}(d_i, 1), \mathcal{N}(d_i',1)) =  \frac{1}{2} (d_i -d_i')^2$. We further define event $\mathcal{E}$ to be the event that algorithm $\mathcal{A}$ finally outputs a solution in set $OPT(\mathcal{I}; \varepsilon_1, \varepsilon_2)$, then since $\mathcal{A}$ is $(\delta, \varepsilon_1, \varepsilon_2)$-correct and $OPT(\mathcal{I}; \varepsilon_1, \varepsilon_2)$ is disjoint from $OPT(\mathcal{I}'; \varepsilon_1, \varepsilon_2)$, we have $\Pr_{\mathcal{A}, \mathcal{I}}(\mathcal{E}) \ge 1-\delta$ and $\Pr_{\mathcal{A}, \mathcal{I}'}(\mathcal{E})\le \delta$. Therefore 
\[
  \sum_{i=1}^n \mathbb{E}_{\mathcal{A}, \mathcal{I}}[\tau_i] \frac{1}{2} (d_i - d_i')^2 \ge  d(1-\delta, \delta)  \ge 0.4 \ln \delta^{-1}.
  \]
  The last step uses the fact that for all $0< \delta <0.1$,
\[
d(1-\delta, \delta) = (1-2\delta) \ln \frac{1-\delta}{\delta} \ge 0.8 \ln \frac{1}{\sqrt{\delta}} = 0.4 \ln \delta^{-1}.
\]

\end{proof}

\section{The Unknown Constraints Case}

\subsection{Proof for Theorem~\ref{lb2:unknownb}} \label{app:unknownb_lower}
\unknownbLower*
Let $\mathcal{A}$ be a $(\delta, \varepsilon_1, \varepsilon_2)$-correct algorithm. For a positive integer $n$, consider the following LP instance $\mathcal{I}$ with $n$ variables and $n$ constraints,
\begin{align*}
\max & \quad x_1\\
\textrm{s.t. } & \quad   x_1 \le C,\\
		& \quad   x_1 + x_i \le C, \quad \forall  2\le i \le n, \\
		& \quad \Sol \ge \V{0}.\\
\end{align*}
Clearly the optimal solution is $x^*_1 = C$ and $x^*_i = 0$ for $i>1$. Every constraint is a binding constraint.  Now we prove that for any $k\in [n]$, algorithm $\mathcal{A}$ should take at least $\Omega\left(\ln(1/\delta)\cdot \max\{\varepsilon_1,\varepsilon_2\}^{-2}\right)$ for the $k^{th}$ constraint. We construct a new LP $\mathcal{I}'$ by subtracting the right-hand side of $k^{th}$ constraint by $2(\varepsilon_1 + \varepsilon_2)$. Then $OPT(\mathcal{I}; \varepsilon_1, \varepsilon_2)$ and $OPT(\mathcal{I}'; \varepsilon_1, \varepsilon_2)$ must be disjoint, since for any $\Sol \in OPT(\mathcal{I}'; \varepsilon_1, \varepsilon_2)$, $\Sol$ will not violate the $k^{th}$ constraint of $\mathcal{I}'$ by more than $\varepsilon_2$, $$x_1 \le C - 2(\varepsilon_1 + \varepsilon_2) + \varepsilon_2 < C - 2\varepsilon_1,$$ which means that $\Sol \notin OPT(\mathcal{I}; \varepsilon_1, \varepsilon_2)$. According to Lemma~\ref{lemma:change_distr}, 
\[
\E[\tau_k] \cdot 4(\varepsilon_1 + \varepsilon_2)^2 \ge 0.8 \ln(1/\delta)
\]
And since $2 \max\{\varepsilon_1, \varepsilon_2\} \ge \varepsilon_1 + \varepsilon_2$, 
\[
\E[\tau_k] = \Omega( \max\{\varepsilon_1, \varepsilon_2\}^{-2} \cdot \ln(1/\delta)).
\]

\subsection{Proof for Theorem~\ref{main}} \label{app:Ellipsoid-UCB_proof}

        Recall that our algorithm and the sample complexity theorem works as follows:
        \begin{algorithm}[!h]                      
\caption{Modified ellipsoid algorithm}
\begin{algorithmic}
    \STATE Let $\mathcal{E}^{(0)}$ be the initial ellipsoid containing the feasible region.\\
    \STATE Draw one sample for each $b_i$, $i\in [m]$.\\
    \STATE Let $k = 0$ and $t=m$.\\
    \STATE Let $T_i(t) = 1$ for all $i$. \\
    \WHILE{stopping criterion is not met\footnotemark}
            \STATE Let $\V{x}^{(k)}$ be the center of $\mathcal{E}^{(k)}$
            \STATE Call UCB method to get constraint $i$ or ``feasible''
            \IF{$\V{x}^{(k)}$ is feasible}
                \STATE Let $\V{x} \gets \V{x}^{(k)}$ if $\V{x}$ is not initialized or $\V{c}^T \V{x}^{(k)} > \V{c}^T \V{x}$. 
                \STATE $\V{y} \gets -\V{c}$
            \ELSE
                \STATE $\V{y}\gets \V{A}_{i}^T$
            \ENDIF
            \STATE Let $\mathcal{E}^{(k+1)}$ be the minimal ellipsoid that contains $\mathcal{E}^{(k)} \cap \{\V{t} ~:~ \V{y}^T \V{t} \le \V{y}^T\V{x}^{(k)}\}$\\
            \STATE Let $k \gets k+1$
    \ENDWHILE
    \STATE Output $\Sol$ or ``failure'' if it was never set.
\end{algorithmic}
\end{algorithm}
\footnotetext{Our stopping criterion is exactly the same as in the standard ellipsoid algorithm, for which there are a variety of possible criteria that work. In particular, one is $\sqrt{\V{c}^T \V{P}^{-1} \V{c}} \leq \min\{\varepsilon_1,\varepsilon_2\}$, where $P$ is the matrix corresponding to ellipsoid $\mathcal{E}^{(k)}$ as discussed above.}

\begin{algorithm}[!h]                      
\caption{UCB-method}
\begin{algorithmic}
    \STATE Input $\V{x}^{(k)}$
    \STATE Set $\delta' = \left(\frac{\delta}{20m}\right)^{2/3}$
    \LOOP
            \STATE  1. Let $j$ be the constraint with the largest index,
            \[
                j = \arg\max_i \V{A}_i \V{x}^{(k)} - \widehat{b}_{j,T_j(t)} + U_i(T_i(t)),
            \]
            where $U_i(s) = 3\sqrt{\frac{2\sigma_i^2\log\left(\log(3s/2)/\delta'\right)}{s}}$ and $\widehat{b}_{j,T_j(t)}$ as in Definition~\ref{def:b-hat}.
            \STATE 2. If $\V{A}_j \V{x}^{(k)} - \widehat{b}_{j,T_j(t)} - U_j(T_j(t))> 0$ return $j$.
            \STATE 3. If $\V{A}_j \V{x}^{(k)} - \widehat{b}_{j,T_j(t)} + U_j(T_j(t))< 0$ return ``feasible''.
            \STATE 4. If $U_j(T_i(t))< \varepsilon_2/2$ return ``feasible''.
            \STATE 5. Let $t \gets t+1$
            \STATE 6. Draw a sample of $b_j$.
            \STATE 7. Let $T_j(t) = T_j(t-1) + 1$.
            \STATE 8. Let $T_i(t) = T_i(t-1)$ for all $i\neq j$.
    \ENDLOOP
\end{algorithmic}
\end{algorithm}

      \sampleEllipsoidUCB*  
        
        Our analysis is inspired by the techniques used in \cite{jamieson2014lil}.
        The following lemma is the same as Lemma 3 in \cite{jamieson2014lil}, and is simplified by setting $\varepsilon=1/2$. We choose 1/2 only for simplicity. It will not change our result asymptotically. The constant in this lemma can be optimized by selecting parameters carefully.
        
        \begin{lemma} \label{UniformBound}
        Let $X_1, X_2, \dots$ be i.i.d. sub-Gaussian random variables with scale parameter $\sigma$ and mean $\mu$. With probability at least $1-20\cdot\delta^{3/2}$, we have for all $t\ge 1$, $\left|\frac{1}{t}\sum_{s=1}^t X_s - \mu \right| \le L(t,\delta)$, where $L(t,\delta)=3\sqrt{\frac{2\sigma^2\log\left(\log(3t/2)/\delta\right)}{t}}$.
        \end{lemma}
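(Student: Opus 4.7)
The plan is to establish this as a finite Law-of-Iterated-Logarithm bound using the peeling trick paired with Ville's maximal inequality for exponential sub-Gaussian martingales. The first step is to reduce to the zero-mean case by working with $Y_s = X_s - \mu$, whose partial sums $S_t := \sum_{s=1}^t Y_s$ inherit the $\sigma$-sub-Gaussian property. A naive Chernoff bound at each $t$ followed by a union bound over all $t \ge 1$ diverges, so the central idea is to control $S_t$ only at a geometric subsequence of times and then transfer the bound to every $t$ by a maximal inequality within each geometric block.

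Concretely, I would partition $\{1, 2, \ldots\}$ into blocks $B_k = [t_k, t_{k+1})$ with $t_k = \lceil (3/2)^k \rceil$, and inside each block apply Ville's inequality to the exponential martingale $M_s^{(\lambda)} = \exp(\lambda S_s - s\lambda^2 \sigma^2/2)$. This yields the uniform bound $\sup_{s \le t_{k+1}} S_s \le \log(1/\gamma_k)/\lambda_k + t_{k+1}\lambda_k \sigma^2/2$ with probability at least $1 - \gamma_k$. Tuning $\lambda_k \propto \sqrt{\log(\log(t_k)/\delta)/(t_k \sigma^2)}$ and setting $\gamma_k \asymp \delta^{3/2}/k^{3/2}$ makes the right-hand side at most $t \cdot L(t, \delta)$ for every $t \in B_k$, using that $t_{k+1}/t_k \le 3/2$ and that $L(t, \delta)$ varies by only a bounded factor within a single block. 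Symmetrizing to handle the lower tail and union bounding over $k \ge 0$ gives total failure probability $\sum_k 2\gamma_k = O(\delta^{3/2})$.

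The main obstacle is bookkeeping the absolute constants carefully enough that the final bound is exactly $1 - 20 \delta^{3/2}$ rather than some larger multiple. This requires pairing the factor $3$ in front of $\sqrt{2\sigma^2 \log(\log(3t/2)/\delta)/t}$ in the definition of $L(t, \delta)$ with the block ratio $3/2$ so as to simultaneously absorb (i) the drift term $t_{k+1}\lambda_k^2 \sigma^2/2$, (ii) the two-sided symmetrization, (iii) the convergent series $\sum_{k \ge 1} k^{-3/2} = \zeta(3/2)$, and (iv) the boundary block $k=0$ where $\log(3t/2)$ is close to $1$ and a small separate Hoeffding estimate must suffice. Once the constants are verified to fit under $20$, the uniform bound $|S_t/t| \le L(t,\delta)$ holds simultaneously for every $t \ge 1$, which is precisely the lemma.
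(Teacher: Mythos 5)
The paper does not actually prove this lemma: it imports it verbatim as Lemma 3 of \cite{jamieson2014lil} with $\varepsilon=1/2$ and explicitly notes that the constants are not optimized. Your geometric-peeling-plus-Ville's-inequality sketch is precisely the standard argument behind that cited result (blocks $t_k\approx(3/2)^k$, exponential sub-Gaussian supermartingale, per-block failure probability $\asymp(\delta/k)^{3/2}$ summed via $\zeta(3/2)$), so the approach is correct and matches the source; the one caveat is that the specific constant $20$ in the failure probability is not pinned down by your sketch (nor by the paper), so you should either carry the bookkeeping through explicitly or state the bound as $1-C\delta^{3/2}$ for an explicit constant $C$.
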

        
        Define event $\mathcal{A}$ to be the event that $\left|\B_{i,t} - b_i \right| \le U_i(t)$ for all $t\ge 0$ and $i\in[m]$. Since our definition of $U_i(t)$ is the same as $L(t, (\delta/20m)^{2/3})$ in Lemma~\ref{UniformBound} with scale parameter $\sigma_i$, the probability that event $\mathcal{A}$ holds is at least $1-\delta$ according to union bound.
        
         We prove the correctness and the sample number of the algorithm conditioning on that $\mathcal{A}$ holds.
        \paragraph{Correctness:} We first prove that the output of our algorithm satisfies relaxed feasibility and relaxed optimality when $\mathcal{A}$ holds. If our UCB-method always gives correct answer, the ellipsoid algorithm will be able to find an $\varepsilon_1$-suboptimal solution. So we only need to prove the correctness of the UCB-method.
        \begin{itemize}
        \item When UCB method returns a violated constraint $j$ in line 2, it is indeed a violated one: since $|\B_{j,T_j(t)}-b_j|\le U_j(T_j(t))$,
        \begin{align*}
        &\V{A}_j \V{x}_k - b_j \\
        \ge & \V{A}_j \V{x}_k - \B_{j,T_j(t)} - U_j(T_j(t))\\
        > & 0.
        \end{align*}
        \item When it returns ``feasible'' in line 3, no constraint is violated:
        \begin{align*}
        &\V{A}_i \V{x}_k - b_i \\
        \le &\V{A}_i \V{x}_k - \B_{i,T_i(t)} + U_i(T_i(t))\\
        \le &\V{A}_j \V{x}_k - \B_{j,T_j(t)} + U_j(T_j(t))\\
        <& 0, \quad \forall i\in[m].
        \end{align*}
        \item When it returns ''feasible'' in line 4, no constraint is violated more than $\varepsilon_2$:
        \begin{align*}
        &\V{A}_i \V{x}_k - b_i \\
        \le & \V{A}_i \V{x}_k - \B_{i,T_i(t)} + U_i(T_i(t))\\
        \le & \V{A}_j \V{x}_k - \B_{j,T_j(t)} + U_j(T_j(t))\\
        \le & \V{A}_j \V{x}_k - \B_{j,T_j(t)} - U_j(T_j(t)) + 2 U_j(T_j(t))\\
        \le & 0 + \varepsilon_2, \quad \forall i \in [m].
        \end{align*}
        \end{itemize}
        Therefore the relaxed feasibility should be satisfied and the relaxed optimality is guaranteed by ellipsoid algorithm.
        
        \paragraph{Number of samples:} We bound the number of samples used on each constraint separately. The number of samples used on constraint $i$ can be stated as
        the maximum $T_i(t)$ where $t$ is a mini-stage in which a sample of $b_i$ is drawn. We bound $T_i(t)$ by showing that $U_i(T_i(t))$ should be larger than a certain value if $b_i$ is sampled at mini-stage $t$. This immediately give us an upper bound of $T_i(t)$ since $U_i(t)$ is a decreasing function of $t$.
        Suppose $b_i$ is sampled at mini-stage $t$ in ellipsoid iteration $k$. Let $i^*$ be the constraint with largest violation. Conditioning on $\mathcal{A}$ holds, the fact that constraint $i$ have a larger index than $i^*$ gives
        \begin{align}\label{eq1}
        & V_i(k) + 2 U_i(T_i(t)) \notag\\
        \ge & \V{A}_i \V{x}_k - \B_{i,T_i(t)} + U_i(T_i(t)) \notag\\
        \ge & \V{A}_{i^*} \V{x}_k - \B_{i^*,T_{i^*}(t)} + U_{i^*}(T_{i^*}(t)) \notag\\
        \ge & V_{i^*}(k).
        \end{align}
        which implies $2U_i(T_i(t)) \ge V^*(k)-V_i(k)$.
        Now look at line 2 in UCB-method. If a sample of $b_i$ is drawn, we should not quit in this step. So if $V_i(k)>0$, we must have
        \begin{align}\label{eq2}
        &V_i(k) - 2U_i(T_i(t))\notag\\
        \le &\V{A}_i \V{x}_k - \B_{i,T_i(t)} - U_i(T_i(t))\notag\\
        \le & 0.
        \end{align}
        Similarly, because of line 3 in UCB-method, if $V_i(k)\le 0$, it should be satisfied that
        \begin{align}\label{eq3}
        &V_i(k) + 2 U_i(T_i(t))\notag\\
        \ge & \V{A}_i \V{x}_k - \B_{i,T_i(t)} + U_i(T_i(t)) \notag\\
        \ge & 0.
        \end{align}
         Putting inequality~\eqref{eq1}, \eqref{eq2} and \eqref{eq3} and $U_i(T_i(t))\ge \varepsilon_2/2$ together, we get the conclusion that $2U_i(T_i(t))\ge \max\{V^*(k)-V_i(k), |V_i(k)|, \varepsilon_2/2\} = gap_{i,\varepsilon_2/2}(k)$ should be satisfied if we draw a sample of $b_i$ at mini-stage $t$ in ellipsoid iteration $k$.
        
        Then we do some calculation,
        \begin{align}
        &2U_i(T_i(t)) \ge gap_{i,\varepsilon_2/2}(k) \notag\\
        \Rightarrow \quad & 6\sqrt{\frac{2\sigma_i^2 \log(\log(3T_i(t)/2)/\delta'}{T_i(t)}} \ge gap_{i,\varepsilon_2/2}(k) \notag\\
        \Rightarrow \quad & \frac{\log(\log(3T_i(t)/2)/\delta')}{T_i(t)} \ge \frac{gap_{i,\varepsilon_2/2}^2(k)}{72\sigma_i^2} \notag\\
        \Rightarrow \quad & T_i(t) \le \frac{108\sigma_i^2}{gap^2_{i,\varepsilon_2/2}(k)} \log\left(\frac{20m}{\delta}\right) \notag\\
        &+ \frac{72\sigma_i^2}{gap^2_{i,\varepsilon_2/2}(k)}\log\log\left(\frac{108\sigma_i^2}{gap^2_{i,\varepsilon_2/2}(k) \delta'}\right). \label{result1}
        \end{align}
        In the last step, we use the fact that for $0 < \delta \le 1$, $c>0$,
        \begin{align*}
        &\frac{1}{t}\cdot\log\left(\frac{\log(3t/2)}{\delta}\right)\ge c \\
        \Rightarrow \quad & t\le \frac{1}{c} \log\left(\frac{2\log(3/(2c\delta))}{\delta}\right).
        \end{align*}
        
        Take maximum of \eqref{result1} over all $k$ and according to the definition of $\Delta_{i, \varepsilon_2/2}$,
        \begin{align*}
        T_i(t) \le \frac{108\sigma_i^2}{\Delta_{i,\varepsilon_2/2}^2} \log\left(\frac{20m}{\delta}\right) + \\ \frac{72\sigma_i^2}{\Delta_{i,\varepsilon_2/2}^2}\log\log\left(\frac{108\sigma_i^2}{\Delta_{i,\varepsilon_2/2}^2 \delta'}\right)
        \end{align*}
        Therefore the overall number of samples is at most
        \[
        O\left(\sum_i \frac{\sigma_i^2}{\Delta_{i,\varepsilon_2/2}^2}\log\frac{m}{\delta} + \sum_i \frac{\sigma_i^2}{\Delta_{i,\varepsilon_2/2}^2}\log\log\left(\frac{\sigma_i^2}{\Delta_{i,\varepsilon_2/2}^2}\right)\right).
        \]

\section{Proofs for  the Unknown Objective Function Case}

\subsection{Proof for Theorem~\ref{thm:instance_lowerbound}}

        We restate the instance-wise lower bound for unknown objective function LP problems.
        
        \unknowncInstLowerbound*

        Let $\mathcal{I}$ be a LP instance $\max_{\{\V{x}:\V{A}\V{x}\le \V{b}\}} \V{c}^T \V{x}$, and $\mathcal{A}$ be a $\delta$-correct algorithm, where $0<\delta<0.1$. Define $t_i$ to be the expected number of samples that algorithm will draw for $c_i$ when the input is $\mathcal{I}$. 
        
        We only need to show that $5\V{t}/\ln(1/ \delta)$ is a feasible solution of the convex program~\eqref{low} that computes $Low(\mathcal{I})$. 
        
        Consider a constraint in~\eqref{low}
        \[
        \sum_{i=1}^n \frac{(s_i^{(k)} -x_i^*)^2}{\tau_i} \le \left( \V{c}^T (\V{x}^* - \V{s}^{(k)}) \right)^2,
        \]
        where $\Sol^*$ is the optimal solution of $\mathcal{I}$ and $\V{s}^{(k)}$ is a corner point of the feasible region of $\mathcal{I}$. To prove that $5\V{t}/\ln(1/ \delta)$ satisfies this constraint, we will construct a new LP instance $\mathcal{I}_{\V{\Delta}}$ by adding $\V{\Delta}$ to the objective function $\V{c}$, such that $\V{s}^{(k)}$ becomes a better solution than $\Sol^*$. We construct vector $\V{\Delta}$ as follows,
        \[
        \Delta_i = \frac{D(x^*_i - s^{(k)}_i)}{t_i}, \quad \textrm{ and } \quad D=\frac{-2 \V{c}^T (\Sol^* - \V{s}^{(k)})}{\sum_{i=1}^n \frac{(s^{(k)}_i - x^*_i)^2}{t_i}}.
        \]
        It is not difficult to verify that $\Sol^*$ is no longer the optimal solution of $\mathcal{I}_{\V{\Delta}}$:
        \begin{align*}
        &\langle \V{c}+\V{\Delta}, \Sol^* - \V{s}^{(k)} \rangle \\
        = & \langle \V{c}, \Sol^* - \V{s}^{(k)} \rangle
        + \langle \V{\Delta}, \Sol^* - \V{s}^{(k)} \rangle \\
        = & \langle \V{c}, \Sol^* - \V{s}^{(k)} \rangle - \\
        & \sum_{i=1}^n  \frac{2 \V{c}^T (\Sol^* - \V{s}^{(k)})}{\sum_{i=1}^n \frac{(s^{(k)}_i - x^*_i)^2}{t_i}} \cdot \frac{x^*_i - s^{(k)}_i }{t_i} \cdot (x^*_i - s^{(k)}_i )\\
        = & -\langle \V{c}, \Sol^* - \V{s}^{(k)} \rangle \\
        < &0.
        \end{align*}
         Then by Lemma~\ref{lemma:change_distr},
         \begin{align*}
         0.8 \ln (1/\delta) &\le  \sum_{i=1}^n t_i \cdot \Delta_i^2\\
          &= \sum_{i=1}^n t_i \cdot \left(\frac{D(x^*_i - s^{(k)}_i)}{t_i}\right)^2 \\
         &=  \sum_{i=1}^n \frac{(x^*_i - s^{(k)})^2}{t_i} \cdot D^2 \\
         &=  \sum_{i=1}^n \frac{(x^*_i - s^{(k)})^2}{t_i}  \cdot \left(\frac{-2 \V{c}^T (\Sol^* - \V{s}^{(k)})}{\sum_{i=1}^n \frac{(s^{(k)}_i - x^*_i)^2}{t_i}}\right)^2\\
         &= 4 \cdot \frac{( \V{c}^T (\Sol^* - \V{s}^{(k)}))^2}{\sum_{i=1}^n \frac{(s^{(k)}_i - x^*_i)^2}{t_i}},
         \end{align*}
         which is equivalent to
         \[
         \sum_{i=1}^n \frac{(s^{(k)}_i - x^*_i)^2}{5 t_i/\ln(1/\delta)} \le ( \V{c}^T (\Sol^* - \V{s}^{(k)}))^2.
         \]
         Therefore $5\V{t}/\ln(1/\delta)$ is a feasible solution of the convex program~\eqref{low}, which completes our proof.

\subsection{Proof for Theorem~\ref{thm:worst_lowerbound}} \label{app:unknownc_worst}

We prove the worst case lower bound for unknown $\V{c}$ case.

       \unknowncWorstLowerbound*
       
        The following lemma will be used in the construction of desired LP instances.
        \begin{lemma}\label{subsets}
        Let $n$ be a positive integer. There exists a constant $c$, a positive integer $l=\Omega(n)$ and $z=2^{cn}$  sets $W_1, \dots, W_z \subseteq [n]$ such that 
        \begin{itemize}
        \item For all $i \in [z]$, we have $|W_i| = l = \Omega(n)$.
        \item For all $i\neq j$, $|W_i \cap W_j| \le l/2$.
        \end{itemize}
        \end{lemma}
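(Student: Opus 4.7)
The plan is to establish the lemma by the probabilistic method. I would first fix $l = \lfloor n/3 \rfloor$, so that $l = \Omega(n)$ by construction, then sample $W_1, \dots, W_z$ independently and uniformly at random from $\binom{[n]}{l}$, the family of size-$l$ subsets of $[n]$, where $z = 2^{cn}$ for a small constant $c > 0$ to be fixed below.

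For any fixed pair $i \neq j$, condition on $W_i$ and observe that $|W_i \cap W_j|$ follows a hypergeometric distribution with population $n$, $l$ ``successes'' in the population, and $l$ draws. Its mean is $\mu = l^2/n \le n/9$, while the target threshold $l/2 \ge n/6 - 1$ is a constant fraction away from $\mu$; concretely, $l/2 - \mu = \Omega(n)$. Hoeffding's inequality for hypergeometric random variables then yields
\[
\Pr\bigl[\,|W_i \cap W_j| > l/2\,\bigr] \;\le\; \exp\!\bigl(-2(l/2 - \mu)^2 / l\bigr) \;\le\; e^{-\alpha n}
\]
for some absolute constant $\alpha > 0$ depending only on the choice $l = \lfloor n/3 \rfloor$.

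A union bound over all $\binom{z}{2}$ pairs shows that the probability that some pair of sampled sets violates the intersection bound is at most $z^2 \, e^{-\alpha n}$. Choosing $c$ small enough that $2c \ln 2 < \alpha$ (for instance $c = \alpha / (4 \ln 2)$) makes this quantity strictly less than $1$, so with positive probability the random sampling produces a family in which every pairwise intersection has size at most $l/2$. Any such realization provides the desired collection of $z = 2^{cn}$ sets, each of size $l = \Omega(n)$.

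The main obstacle is really just bookkeeping rather than any conceptual difficulty: one must verify that Hoeffding's inequality applies to the hypergeometric case (it does, by Hoeffding's original treatment of sampling without replacement) and that the constants $\alpha$ and $c$ can be chosen consistently so that both $l = \Omega(n)$ and $z = 2^{\Omega(n)}$ hold simultaneously. Since the bad-event probability is exponentially small in $n$ while the number of pairs is only $2^{2cn}$, the argument has substantial slack; any constant fraction $l/n$ strictly less than $1/2$ would work equally well, and one could alternatively replace the union bound by a deletion argument without changing the conclusion.
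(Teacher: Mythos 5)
Your proof is correct and follows essentially the same route as the paper: draw $z=2^{cn}$ uniformly random size-$l$ subsets with $l$ a constant fraction of $n$, bound the probability that a fixed pair intersects in more than $l/2$ elements by $e^{-\Omega(n)}$, and union-bound over the $O(z^2)$ pairs, choosing $c$ small enough that the total failure probability is below $1$. The only difference is cosmetic --- the paper takes $l=n/10$ and asserts the concentration bound without naming it, whereas you take $l=\lfloor n/3\rfloor$ and supply the missing detail via Hoeffding's inequality for sampling without replacement, which is a welcome addition.
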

        \begin{proof}
        Define  $l = n/10$. Let each $W_i$ be  a uniformly random subset of $[n]$ with size $l$. Then it is satisfied that
        \[
          \Pr[\vert W_i \cap W_j \vert > l/2]\le 2^{-\Omega(n)}
          \]
        for all $1\le i, j \le n, i\neq j$. 
        So we can choose sufficiently small $c$ such that
          \[
          \Pr[\exists i\neq j, \vert W_i \cap W_j \vert > l/2]\le z^2 2^{-\Omega(n)}< 1,
          \]
        which implies the existence of a desired sequence of subsets.
          \end{proof}
          
          Now for any $\delta$-correct algorithm $\mathcal{A}$, we prove the existence of  LP instances $\mathcal{I}_1, \mathcal{I}_2, \dots$, which all have $n$ variables. 
          
          For simplicity, all the linear program instances we construct in this proof share the same feasible region, which we define as follows. Let $W_1,\dots,W_z \subseteq [n]$ be the sequence of subsets in Lemma~\ref{subsets}. For a subset $W\subseteq [n]$,  we define a point $\V{p}^W$ 
          \begin{align*}
          p^W_i = \left\{\begin{array}{ll}
          			1, & \textrm{ if } i \in W;\\
        			0, & \textrm{ otherwise. }
        			\end{array}
        			\right.
        \end{align*}
          The feasible region we are going to use throughout this proof is the convex hull of $\V{p}^{W_1}, \dots, \V{p}^{W_z}$. 
         
         To find a desired LP instance $\mathcal{I}_k$, we first choose an arbitrary constant $\Delta_k$.  We construct $z$ different LP instances $\mathcal{I}_{\Delta_k, W_1}, \dots, \mathcal{I}_{\Delta_k, W_z}$ and show that at least one of them satisfies the condition in the theorem. Define the objective function $\V{c}^{W_j}$ of $\mathcal{I}_{\Delta_k, W_j}$ to be
          \begin{align*}
          c^{W_j}_i = \left\{\begin{array}{ll}
          			\Delta_k, & \textrm{ if } i \in W_j;\\
        			-\Delta_k, & \textrm{ otherwise. }
        			\end{array}
        			\right.
        \end{align*}
         Then clearly the optimal solution of $\mathcal{I}_{\Delta_k, W_j}$ is point $\V{p}^{W_j}$. We define 
        $\Pr[\mathcal{A}(\mathcal{I}_{\Delta, W_i}) = \V{p}^{W_j}]$ to be the probability that algorihtm $\mathcal{A}$ outputs $\V{p}^{W_j}$ when the input is $\mathcal{I}_{\Delta, W_i}$. Then we have 
         \[
         \Pr[\mathcal{A}(\mathcal{I}_{\Delta, W_i}) = \V{p}^{W_i}] \ge 1-\delta,
         \]
         and
         \[
         \sum_{j: j\neq i} \Pr[\mathcal{A}(\mathcal{I}_{\Delta, W_i}) = \V{p}^{W_j}] \le \delta.
         \]
         Thus there must exists $W_k$ such that
         \[
        	\Pr[\mathcal{A}(\mathcal{I}_{\Delta, W_i}) = \V{p}^{W_k}] \le 2\delta/z.
         \]
        Let $T$ be the number of samples used by algorithm $\mathcal{A}$ when the input is $\mathcal{I}_{\Delta_k, W_k}$. Since $\mathcal{A}$ is a $\delta$-correct algorithm, $\Pr[\mathcal{A}(\mathcal{I}_{\Delta, W_k}) = \V{p}^{W_k}] \ge 1 - \delta> 0.9$.  So if we define event $\mathcal{E}$ to be the event that $\mathcal{A}$ outputs $\V{p}^{W_k}$ and apply Lemma~\ref{ChangeDistr},
         \begin{align*}
         &\mathbb{E}[T]\cdot (2 \Delta^2) \\
         \ge & d\left(\Pr[\mathcal{A}(\mathcal{I}_{\Delta, W_k}) \V{p}^{W_k}],\Pr[\mathcal{A}(\mathcal{I}_{\Delta, W_i}) = \V{p}^{W_k}]\right) \\
        \ge & \Omega(\ln(z/\delta)\\
        = & \Omega(\ln z + \ln(1/\delta)).
        \end{align*}
        Here we use the following property of $d(1-\delta,\delta)$ function: for $0<\delta<0.1$, $d(1-\delta,\delta)\ge 0.4 \ln(1/\delta)$. So we get a lower bound for $\mathbb{E}[T]$, $$\mathbb{E}[T] \ge \Omega\left(\Delta^{-2}(\ln z + \ln(1/\delta))\right).$$
        
        Meanwhile if we look at the Instance Lower Bound, $Low(\mathcal{I}_{\Delta, W_k})$, 
        	\begin{align*}
                 \min_\tau \quad & \sum_{i=1}^n \tau_i \\
                s.t.\quad & \sum_{i=1}^n \frac{(p_i^{W_j} -p_i^{W_k})^2}{\tau_i} \le \langle \V{c}^{W_k}, (\V{p}^{W_k} - \V{p}^{W_j})\rangle^2, \forall j \\
                 &\tau_i\ge 0, 
                \end{align*}
        It is easy to verify that $\tau_i = \frac{8}{l\Delta^2}$ for all $i$ is a feasible solution. So we have $Low(\mathcal{I}_{\Delta, W_k})=\Theta(\frac{8n}{l\Delta^2}) = \Theta(\Delta^{-2})$. Therefore the number of samples that $\mathcal{A}$ will use on $\mathcal{I}_{\Delta, W_k}$ is $\Omega\left(Low(\mathcal{I}_{\Delta, W_k})(\ln z+\ln (\delta^{-1})) \right)$ in expectation.
        
        By simply setting $\Delta_k = \frac{1}{k}$, we will get an infinite sequence of LP instances as stated in the theorem.  

\subsection{Proof for Theorem~\ref{thm:unknownc}}

In this section, we prove the sample complexity of our successive elimination algorithm for unknown $c$ case.

\unknowncSample*

The following lemma will be used in our proof.
        \begin{lemma}\label{gaussianLem}
            Given a set of Gaussian arms with unit variance and mean $c_1, \dots, c_n$. Suppose we take $\tau_i$ samples for arm $i$. Let $X_i$ be the empirical mean. Then for an arbitrary vector $\V{p}$, 
            \[
            \Pr\left[ |\V{p}^T \V{X} - \V{p}^T \V{c}|\ge \varepsilon\right] \le 2 \exp\left(-\frac{\varepsilon^2}{2 \sum p_i^2/\tau_i}\right)
            \]
            \end{lemma}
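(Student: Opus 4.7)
The plan is to reduce the claim to a single one-dimensional Gaussian tail bound. First I would observe that, since each arm $i$ produces $\tau_i$ i.i.d.\ samples from $\mathcal{N}(c_i,1)$, the empirical mean satisfies $X_i \sim \mathcal{N}(c_i, 1/\tau_i)$, and moreover the $X_i$ are mutually independent because samples from distinct arms are independent.

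Next I would use the fact that an independent linear combination of Gaussians is Gaussian. Specifically, for a fixed vector $\V{p}$, the random variable $\V{p}^T \V{X} = \sum_{i=1}^n p_i X_i$ is a sum of independent Gaussians with means $p_i c_i$ and variances $p_i^2/\tau_i$, so
\[
\V{p}^T \V{X} \;\sim\; \mathcal{N}\!\left(\V{p}^T \V{c},\; \sum_{i=1}^n \frac{p_i^2}{\tau_i}\right).
\]
Consequently $\V{p}^T \V{X} - \V{p}^T \V{c}$ is a zero-mean Gaussian with variance $\sigma^2 := \sum_i p_i^2/\tau_i$.

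Finally I would invoke the standard one-dimensional Gaussian tail bound: for $Z \sim \mathcal{N}(0,\sigma^2)$, one has $\Pr[|Z|\ge \varepsilon]\le 2\exp(-\varepsilon^2/(2\sigma^2))$ (which follows from the Chernoff bound applied to the moment generating function $\E[e^{tZ}]=e^{\sigma^2 t^2/2}$ and a union bound over the two tails). Substituting $\sigma^2 = \sum_i p_i^2/\tau_i$ yields exactly the claimed inequality.

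There is no real obstacle here; the only point worth being careful about is the independence assumption across arms (so that the variances add rather than requiring a covariance term) and the fact that the samples within a single arm are i.i.d.\ (so that the empirical mean has variance $1/\tau_i$). Both are implicit in the setup of the AIALO sampling model, where each ``sample'' is an independent draw from the specified parameter's distribution. Once these are noted the argument is essentially one line of Gaussian algebra plus a textbook tail bound, so I would present it concisely rather than expanding the MGF computation.
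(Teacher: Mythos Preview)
Your proposal is correct and matches the paper's own proof, which is essentially the one-line observation that $\V{p}^T \V{X} - \V{p}^T \V{c}$ is a zero-mean Gaussian with variance $\sum_i p_i^2/\tau_i$. You have simply spelled out the independence and tail-bound details that the paper leaves implicit.
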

            \begin{proof}
            By definition, $\V{p}^T \V{X} - \V{p}^T \V{c}$ follows Gaussian distribution with mean $0$ and variance $\sum_i p_i^2/\tau_i$. 
            \end{proof}

        We define a good event $\mathcal{E}$ to be the event that $|(\V{x}-\V{y})^T (\widehat{\V{c}}^{(r)} - \V{c}))|\le \varepsilon^{(r)} /\lambda$ for all $r$ and $\V{x},\V{y} \in S^{(r)}$. According to Lemma~\ref{gaussianLem}, 
        \begin{align*}
        \Pr[\mathcal{E}] \ge 1 - \sum_r \sum_{\V{x}\in S^{(r)}} \sum_{\V{y} \in S^{(r)}}  2 \exp{- \frac{(\varepsilon/\lambda)^2}{2 \sum (x_i - y_i)^2/\tau_i}}.
        \end{align*}
        Since $\V{\tau}$ satisfies the constraints in~\eqref{lowall},
        \begin{align*}
        &\sum_r \sum_{\V{x}\in S^{(r)}} \sum_{\V{y} \in S^{(r)}}  2 \exp{- \frac{(\varepsilon/\lambda)^2}{2 \sum (x_i - y_i)^2/\tau_i}} \\
        \le & \sum_r \sum_{\V{x}\in S^{(r)}} \sum_{\V{y} \in S^{(r)}}  2 \exp\left(- \ln(2/\delta^{(r)})\right)\\
        = & \sum_r \sum_{\V{x}\in S^{(r)}} \sum_{\V{y} \in S^{(r)}} \delta^{(r)}\\
        \le & \delta
        \end{align*}
        Therefore $\Pr[\mathcal{E}] \ge 1-\delta$. 
        
        We first prove the correctness of the algorithm conditioning on $\mathcal{E}$.
        
        \begin{lemma} \label{correctness}
        When the good event $\mathcal{E}$ holds, the optimal LP solution $\V{x}^* = \max_{\V{A}\V{x} \le \V{b}} \V{c}^T \V{x}$ will not be deleted.
        \end{lemma}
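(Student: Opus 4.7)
The plan is to proceed by induction on the round index $r$, showing that if $\V{x}^*\in S^{(r)}$ then $\V{x}^*\in S^{(r+1)}$. The base case is immediate since $S^{(1)}$ is defined to contain every extreme point of the feasible region, and $\V{x}^*$ is an extreme point (any LP with a bounded optimum has an optimal extreme point, which is unique under the assumption stated in Section~\ref{sec:unknownc}). For the inductive step, the key is to use the good event $\mathcal{E}$ to turn the true optimality inequality $\V{c}^T \V{x}^* \ge \V{c}^T \V{x}^{(r)}$ into a lower bound on $\widehat{\V{c}}^{(r)\transpose}\V{x}^*$ that beats the elimination threshold.

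Concretely, I would apply $\mathcal{E}$ to the pair $(\V{x}^*, \V{x}^{(r)})\in S^{(r)}$ to obtain
\[
\bigl|(\V{x}^* - \V{x}^{(r)})^\transpose (\widehat{\V{c}}^{(r)} - \V{c})\bigr| \le \varepsilon^{(r)}/\lambda,
\]
and then decompose
\[
(\V{x}^* - \V{x}^{(r)})^\transpose \widehat{\V{c}}^{(r)} = (\V{x}^* - \V{x}^{(r)})^\transpose \V{c} + (\V{x}^* - \V{x}^{(r)})^\transpose (\widehat{\V{c}}^{(r)} - \V{c}).
\]
Since $\V{x}^{(r)}\in S^{(r)}$ is feasible for the LP, optimality of $\V{x}^*$ gives $(\V{x}^* - \V{x}^{(r)})^\transpose \V{c}\ge 0$, so the right-hand side is at least $-\varepsilon^{(r)}/\lambda$. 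Rearranging,
\[
\langle \V{x}^*, \widehat{\V{c}}^{(r)}\rangle \;\ge\; \langle \V{x}^{(r)}, \widehat{\V{c}}^{(r)}\rangle - \varepsilon^{(r)}/\lambda.
\]

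Finally I would compare this to the elimination threshold in~\eqref{eliminateCond}, which is $\langle \V{x}^{(r)}, \widehat{\V{c}}^{(r)}\rangle - \varepsilon^{(r)}/2 - 2\varepsilon^{(r)}/\lambda$. The slack between the bound above and this threshold is $\varepsilon^{(r)}/2 + \varepsilon^{(r)}/\lambda > 0$, so the retention condition is satisfied strictly and $\V{x}^*\in S^{(r+1)}$. The argument is essentially bookkeeping on additive errors; the only subtlety, which I don't expect to be a real obstacle, is making sure the concentration slack $\varepsilon^{(r)}/\lambda$ was chosen small enough relative to the elimination margin $\varepsilon^{(r)}/2 + 2\varepsilon^{(r)}/\lambda$, and indeed the margin dominates by construction of $\lambda=10$.
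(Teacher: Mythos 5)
Your proof is correct and uses essentially the same argument as the paper: apply the good event $\mathcal{E}$ to the pair $(\V{x}^*, \V{x}^{(r)})$, invoke optimality of $\V{x}^*$ under the true $\V{c}$, and check that the resulting slack $\varepsilon^{(r)}/\lambda$ is smaller than the elimination margin $\varepsilon^{(r)}/2 + 2\varepsilon^{(r)}/\lambda$. The paper phrases it as a proof by contradiction while you argue directly by induction, but the inequalities are identical.
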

        \begin{proof}
        Suppose to the contrary $\Sol^*$ is deleted in iteration $r$, i.e., $\Sol^* \in S^{(r)}$ but $\Sol^* \notin S^{(r+1)}$. Then according to~\eqref{eliminateCond}, when the objective function is $\widehat{\V{c}}^{(r)}$, $\Sol^*$ is at least $\varepsilon^{(r)}/2-2\varepsilon^{(r)}/\lambda$ worse than $\Sol^{(r)}$, 
        \[
        \langle \Sol^{(r)} - \Sol^* , \widehat{\V{c}}^{(r)}\rangle > \varepsilon^{(r)}/2+2\varepsilon^{(r)}/\lambda.
        \]
        By the definition of the optimal solution $\Sol^*$,
        \[
        \langle \V{c}, \Sol^* - \Sol^{(r)} \rangle > 0. 
        \]
        Combining the two inequalities will give
        \[
        \langle \V{c} - \widehat{\V{c}}^{(r)}, \Sol^* - \Sol^{(r)} \rangle > \varepsilon^{(r)}/2+2\varepsilon^{(r)}/\lambda >
        \varepsilon^{(r)}/\lambda,
        \]
        contradictory to that event $\mathcal{E}$ holds.
        \end{proof}
        
        We then bound the number of samples conditioning on $\mathcal{E}$.
        We first prove the following lemma.
        \begin{lemma} \label{correct2}
        When event $\mathcal{E}$ holds,  all points $\V{s}$ in set $S^{(r+1)}$ satisfies
        \[
        \langle \V{c}, \Sol^* - \V{s} \rangle < \varepsilon^{(r)}.
        \]
        after the $r^{th}$ iteration.
        \end{lemma}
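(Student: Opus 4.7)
The plan is a short triangle-inequality argument that splits $\langle \V{c}, \V{x}^* - \V{s} \rangle$ into a term controlled by the good event $\mathcal{E}$ and a term controlled by the elimination rule on line 9 of Algorithm \ref{unknowncAlgo}. Lemma \ref{correctness} already guarantees that $\V{x}^* \in S^{(r)}$ whenever $\mathcal{E}$ holds, so we may freely use both $\V{x}^* \in S^{(r)}$ and $\V{s} \in S^{(r)}$ inside the uniform concentration bound defining $\mathcal{E}$.

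Concretely, I would write
\[
\langle \V{c}, \V{x}^* - \V{s} \rangle = \langle \V{c} - \widehat{\V{c}}^{(r)}, \V{x}^* - \V{s} \rangle + \langle \widehat{\V{c}}^{(r)}, \V{x}^* - \V{s} \rangle .
\]
The first summand is at most $\varepsilon^{(r)}/\lambda$ by the definition of $\mathcal{E}$, since both $\V{x}^*$ and $\V{s}$ lie in $S^{(r)}$. For the second summand, I would use the fact that $\V{x}^{(r)}$ is the maximizer of $\widehat{\V{c}}^{(r)}$ over $S^{(r)}$ (so $\langle \widehat{\V{c}}^{(r)}, \V{x}^* \rangle \le \langle \widehat{\V{c}}^{(r)}, \V{x}^{(r)} \rangle$), followed by the elimination condition \eqref{eliminateCond} that $\V{s} \in S^{(r+1)}$ satisfies, namely $\langle \widehat{\V{c}}^{(r)}, \V{x}^{(r)} - \V{s} \rangle \le \varepsilon^{(r)}/2 + 2\varepsilon^{(r)}/\lambda$. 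Chaining these two gives $\langle \widehat{\V{c}}^{(r)}, \V{x}^* - \V{s} \rangle \le \varepsilon^{(r)}/2 + 2\varepsilon^{(r)}/\lambda$.

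Adding the two bounds yields
\[
\langle \V{c}, \V{x}^* - \V{s} \rangle \;\le\; \frac{\varepsilon^{(r)}}{2} + \frac{3\varepsilon^{(r)}}{\lambda},
\]
and plugging in $\lambda = 10$ gives $\langle \V{c}, \V{x}^* - \V{s} \rangle \le 0.8\,\varepsilon^{(r)} < \varepsilon^{(r)}$, which is precisely the claim.

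There is no real obstacle here; the only thing one must be careful about is justifying $\V{x}^* \in S^{(r)}$, which is exactly the content of Lemma \ref{correctness}, and ensuring that the two pairs used inside $\mathcal{E}$ (namely $(\V{x}^*, \V{s})$, and implicitly also $(\V{x}^*, \V{x}^{(r)})$ if one writes the chain more carefully) both live in $S^{(r)}$. Since $\mathcal{E}$ was defined uniformly over all pairs in $S^{(r)}$, this is automatic. The constants line up cleanly as long as $\lambda \ge 10$, which is why the algorithm fixes $\lambda = 10$ on line 3.
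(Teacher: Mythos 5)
Your proof is correct and takes essentially the same route as the paper: the paper argues the contrapositive (any $\V{s}\in S^{(r)}$ with $\langle \V{c}, \V{x}^*-\V{s}\rangle > \varepsilon^{(r)}$ must fail the retention condition \eqref{eliminateCond} and be eliminated), but it relies on exactly your three ingredients — Lemma~\ref{correctness} to keep $\V{x}^*\in S^{(r)}$, the concentration event $\mathcal{E}$ applied to the pair $(\V{x}^*,\V{s})$, and the optimality of $\V{x}^{(r)}$ under $\widehat{\V{c}}^{(r)}$ combined with the elimination threshold $\varepsilon^{(r)}/2+2\varepsilon^{(r)}/\lambda$. Your direct version is fine (and in fact gives the bound $0.8\,\varepsilon^{(r)}$, with a little room to spare over the claimed $\varepsilon^{(r)}$).
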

        \begin{proof}
        Suppose when entering the $r^{th}$ iteration, there exists $\V{s}\in S^{(r)}$ such that 
        $\langle \V{c}, \Sol^* - \V{s} \rangle > \varepsilon^{(r)}$. Then since $\mathcal{E}$ holds and $\lambda = 10$,
        \begin{align*}
        \langle \V{c}, \Sol^* - \V{s} \rangle &> \langle \widehat{\V{c}}^{(r)}, \Sol^* - \V{s} \rangle - \varepsilon^{(r)}/\lambda \\
        & > (1 - 1/\lambda) \varepsilon^{(r)} \\
        & > \varepsilon^{(r)} /2 + 2 \varepsilon^{(r)} / \lambda.
        \end{align*} 
        By Lemma~\ref{correctness}, we have $\Sol^* \in S^{(r)}$.
        Therefore $\V{s}$ will be deleted in this iteration.
        \end{proof}
        
        Now consider a fixed iteration $r$.
        Let $\V{\tau}^*$ be the optimal solution of the convex program~\eqref{low} that computes $low(\mathcal{I})$. Define $\alpha = 32 \lambda^2 \ln(2/\delta^{(r)})$. We show that $\V{t} = \alpha \V{\tau}^*$ is a feasible solution in the convex program~\eqref{lowall} that computes $LowAll(S^{(r)}, \varepsilon^{(r)}, \delta^{(r)})$. For any $\V{x},\V{y}\in S^{(r)}$, 
        \begin{align*}
        \sum \frac{(x_i-y_i)^2}{t_i} &= \frac{1}{\alpha} \sum \frac{(x_i-y_i)^2}{\tau_i^*}\\
        & = \frac{1}{\alpha} \sum \frac{(x_i-x^*_i + x^*_i - y_i)^2}{\tau_i^*}\\
        & \le \frac{1}{\alpha} \sum \frac{2(x_i-x^*_i)^2 + 2(x^*_i - y_i)^2}{\tau_i^*}
        \end{align*}
        due to the fact that $(a+b)^2 \le 2a^2 + 2b^2$ for all $a,b \in \mathbb{R}$. 
        
        Since $\tau^*$ satisfies the constraints in $Low(\mathcal{I})$ function~\eqref{low},
        \begin{align*}
        & \frac{1}{\alpha} \sum \frac{2(x_i-x^*_i)^2 + 2(x^*_i - y_i)^2}{\tau_i^*} \\
         \le & \frac{2}{\alpha} \left((\V{c}^T(\Sol^*-\Sol))^2 + (\V{c}^T(\Sol^* - \V{y}))^2\right) 
        \end{align*}
        And because of Lemma~\ref{correct2},
        \begin{align*}
        & \frac{2}{\alpha} \left((\V{c}^T(\Sol^*-\Sol))^2 + (\V{c}^T(\Sol^* - \V{y}))^2\right) \\
        \le & \frac{4}{\alpha} (\varepsilon^{(r-1)})^2 = \frac{(\varepsilon^{(r)})^2}{2\lambda^2 \ln(2/\delta^{(r)})}.
        \end{align*}
        
        So we have proved that $\V{t} = \alpha \V{\tau}^*$ is a feasible solution of the convex program that computes $LowAll(S^{(r)}, \varepsilon^{(r)}, \delta^{(r)})$. Thus the number of samples used in iteration $r$, $\sum_{i=1}^n t^{(r)}_i$, is no more than 
        \begin{align*}
         &\sum_{i=1}^n t^{(r)}_i  \le  \sum_{i=1}^n t_i =  \alpha \sum_i \tau_i^* \\
        = &O(Low(\mathcal{I})(\ln |S^{(r)}| + \ln \delta^{-1} + \ln r ) 
        \end{align*}
        
        Conditioning on $\mathcal{E}$, the algorithm will terminate before $\lfloor \log(\Delta^{-1})\rfloor + 1$ iterations according to Lemma~\ref{correct2}. Therefore the total number of samples is
        \[
        O\left( Low(\mathcal{I}) \ln \Delta^{-1} ( \ln |S^{(1)}| + \ln \delta^{-1} + \ln \ln \Delta^{-1})\right).
        \]

\end{document}